\newtheorem{Thm}{Theorem}[section] 		
\newtheorem{Lem}[Thm]{Lemma}	
\newtheorem{Cor}[Thm]{Corollary}	
\newtheorem{Def}[Thm]{Definition}	
\newtheorem{Rmk}[Thm]{Remark}		
\newcommand{\ZZ}{\mathbb{Z}}
\newcommand{\FF}{\mathbb{F}}%
\newcommand{\EP}{\widetilde{\mathsf{EP}}_{\mathsf{RMFE}}}
\begin{document}

    \title{Coded Distributed (Batch) Matrix Multiplication over Galois Ring via RMFE}
    \date{}
    \author[*]{Yi Kuang}
    \author[*]{Jiang Li}
    \author[*]{Songsong Li}
    \author[*]{Chaoping Xing}
    \affil[*]{School of Electronic Information and Electrical Engineering, Shanghai Jiao Tong University, Shanghai,  China,\ \authorcr{schemer@sjtu.edu.cn, lijiang22222@sjtu.edu.cn, songsli@sjtu.edu.cn, xingcp@sjtu.edu.cn }}




    \maketitle

    \begin{abstract}
    Coded Distributed Matrix Multiplication (CDMM) is a distributed matrix multiplication (DMM) for large-scale matrices through a coding scheme such that any $R$ worker node among all $N$ worker nodes can recover the final product, where $N$ corresponds to the length of the code and $R\leq N$ is called the recovery threshold. The state-of-art CDMM schemes, such as EP codes for Single DMM and GCAS codes for batch DMM, are defined over a Galois field $\mathsf{GF}(q)$ of size $q\geq N$. These are inefficient for small Galois fields such as $\mathsf{GF}(2)$ and the integer residue ring $\mathbb{Z}_{p^{e}}$ due to the lack of invertible elements for interpolation. DMM over $\mathbb{Z}_{p^{e}}$ (such as $\mathbb{Z}_{2^{64}}$ ) is well-motivated in practice due to their direct compatibility with hardware. In this work, we construct efficient CDMM over the Galois ring $\mathsf{GR}(p^e,d)$ which is an extension ring over $\mathbb{Z}_{p^{e}}$ of degree $d$, particularly, $\mathsf{GR}(p,d)=\mathsf{GF}(p^d)$ is the Galois field and $\mathsf{GR}(p^e,1)=\mathbb{Z}_{p^e}$. We first give a general CDMM framework for the batch of $n$ matrix multiplications via the famous RMFE (Cascudo et al. Crypto'18). Compared with GCSA, our construction has a smaller recovery threshold by a factor of $1/n$. Next, we optimize EP codes via batch preprocessing of the input matrices. We give two types of Single CDMM, which can achieve almost the same performance as EP codes over a Galois field with size $q\geq N$. Finally, we present the experimental analysis of our CDMM on Galois rings.
    \end{abstract}

    \begin{IEEEkeywords}
        Matrix multiplication, Distributed computing, Coding, Reverse Multiplication Friendly Embedding.
    \end{IEEEkeywords}

    \section{Introduction}
        \IEEEPARstart{M}{atrix} multiplication is the core operation in linear algebra and has been one of the most practically applicable techniques in computer science, such as machine learning, scientific computing, and graph processing. Many such applications require processing terabytes or even petabytes of data. As an efficient solution for processing massive data sets, distributed computing divides large computing tasks into smaller subtasks and outsources them to a set of distributed server nodes. However, distributed computing faces critical issues related to communication load and straggler effects, i.e, the effect caused by some computing nodes which run unintentionally slower than others, thereby increasing the overall time needed to complete the computing tasks.

        Coded distributed matrix multiplication (CDMM for short) efficiently mitigates the straggler effects in distributed computing. In the framework of CDMM, a master node (or a user) distributes matrix multiplication across $N$ worker nodes (or servers) through a coding scheme, such that the response from any $R$ worker nodes ($R$ is called the recovery threshold) is sufficient for the master node to recover the desired product. Among all distributed matrix multiplication (DMM) schemes, CDMM has the advantage of theoretical analysis and flexibility in the recovery threshold. The main metrics of interest for CDMM include the recovery threshold $R$, the complexities of encoding, decoding and each worker node's computation, and the communication cost to upload subtasks to $N$ worker nodes and download data from $R$ successful worker nodes. 
        
        Many codes have been proposed in CDMM to optimize different metrics. According to the way of matrix-partition before encoding, there are Polynomial codes \cite{DBLP:conf/nips/YuMA17}, MatDot and PolyDot codes \cite{DBLP:journals/tit/DuttaFHJCG20}, and Entangled polynomial codes (EP codes for short) \cite{Entangle}, where EP codes are a unified framework of Polynomial and MatDot codes. Besides, Cross Subspace Alignment Codes (CSA codes for short) \cite{CSAcodes} are used for distributed computing of batch matrix multiplications, aiming to improve communication cost. In the same paper, by combining CSA and EP codes, the generalized CSA (GCSA) codes have been proposed to add more flexibility to the recovery threshold and each worker node's computation.

        To the best of our knowledge, most known CDMM schemes were restricted on a Galois field $\mathsf{GF}(q)$, where $q$ is the size of the field. Moreover, in state-of-the-art CDMM constructions, such as EP codes via matrix-partitioning or GCSA codes via batch-progressing and matrix-partitioning, the distributed order $N$ (i.e., the total number of worker nodes) is at most the field size, i.e., $q\geq N$. The integer residue ring $\mathbb{Z}_{2^{32}}$ or $\mathbb{Z}_{2^{64}}$ is well-motivated in practice as it is directly compatible with computation in real-life programming and computer architectures such as CPU words. However, unlike the Galois field $\mathbb{F}_{2^e}$ of size $2^e$, the lack of sufficient invertible elements under subtraction in $\mathbb{Z}_{2^{e}}$ brings non-trivial difficulties in designing CDMM schemes over $\mathbb{Z}_{2^{e}}$ since the polynomial interpolation or solving linear equations in the decoding process cannot work well in this case (the same is also true for small Galois fields $\mathsf{GF}(q)$ with $q<N$ for a given distributed order $N$). Let $\mathsf{GR}(2^e,m)$ be the extension Galois ring over $\mathbb{Z}_{2^{e}}$ with extension degree $m$ (refer to Section~\ref{sec:2.1}). By taking $m>\log N$, then $\mathsf{GR}(2^e,m)$ contains at least $N$ invertible elements supporting Lagrange interpolation. Thus, a trivial way is to embed matrices with data in $\mathbb{Z}_{2^e}$ or small Galois fields $\mathsf{GF}(q)$ into matrices with data in a large Galois ring to perform CDMM for large distributed order $N$, which we call a plain CDMM over the integer residue ring $\mathbb{Z}_{2^e}$. However, this will result in significant computation and communication overhead as the operations in the extension ring are much more expensive than $\mathbb{Z}_{2^e}$ (see Lemma~\ref{Lem:3.1} in Section~{\ref{sec.3.1} for details). 

        \subsection{Our Contribution}
            In this work, we study CDMM over a general Galois ring $\mathsf{GR}:=\mathsf{GR}(p^e,d)$ which is the extension ring of integer residue ring $\mathbb{Z}_{p^e}$ of degree $d$. Particularly, $\mathsf{GR}(2^e,1)=\mathbb{Z}_{2^{e}}$ is the integer residue ring and $\mathsf{GR}(p,d)=\mathsf{GF}(p^d)$ is the Galois field of size $p^d$. We give constructions of CDMM for large-scale matrices with data in $\mathsf{GR}$, which are as efficient as one can achieve over Galois fields for the same size of input matrices and distributed order. The novelty of our construction consists of batch preprocessing and using the famous Reverse Multiplication Friendly Embedding (RMFE for short, refer to Section~\ref{sec:2.3} for details). Assume $\{A_1,A_2,\dots,A_n\}$ and $\{B_1,B_2,\dots,B_n\}$ are two tuples of matrices after batch preprocessing of two given matrices $A$ and $B$ defined over $\mathsf{GR}$. By using the maps in RMFE, the master node first packed $\{A_i\}_{i=1}^n$ and $\{B_i\}_{i=1}^n$ into two matrices $\mathcal{A}$ and $\mathcal{B}$ defined over an extension Galois ring that has sufficient many invertible elements supporting interpolation. After perform CDMM over the extension ring with input $\mathcal{A}$ and $\mathcal{B}$, the master then unpacked the product $\mathcal{A}\mathcal{B}$ to get the batched products $\{A_iB_i\}_{i=1}^n$. Our results greatly change the landscape of CDMM over Galois rings, and $\mathbb{Z}_{2^{e}}$ in particular. Specifically,

            \begin{itemize}
                \item We first construct efficient distributed batch matrix multiplication (DBMM for short) based on EP codes over a general Galois ring $\mathsf{GR}$, i.e., given two tuples of matrices $(A_1,A_2,\dots,A_n)$ and $(B_1,B_2,\dots,B_n)$ with $A_i\in\mathsf{GR}^{s\times r}$ and $B_i\in\mathsf{GR}^{r\times t}$ for $i=1,2,\dots,n$, the goal is to compute the batch products $(A_1B_1, A_2B_2, \dots, A_nB_n)$. We evaluate the performance of our DBMM in terms of recovery threshold, encoding and decoding complexity, computation of each worker node, and the communication cost (all are counted in the number of operations or elements in $\mathsf{GR}$). For input matrices of the same size and maintaining the same distributed order $N$, our coded DBMM (CDBMM for short) over $\mathsf{GR}$ can achieve the same performance as EP codes over the Galois field $\mathsf{GF}({p^{(\log_pN)/{d}}})$ which is the smallest Galois field to implement EP codes with distributed order $N$. In other words, this is the best one can perform DBMM based on EP codes over a Galois ring with a given distributed order $N$. Please refer to our Theorem~\ref{thm 3.2} in Section~\ref{sec:3.2} for details. Besides, as EP code is the unified framework of Polynomial and Matdot codes, our CDBMM can be applied to these two codes.
                \item Next, we optimize Single matrix multiplication based on EP codes via different batch preprocessing of the input matrices $A, B$. We give two types of Single CDMM, called $\EP$-I and $\EP$-II. The first one has optimal metrics in terms of encoding complexity, upload costs, and each worker node's computation, while the second one has optimal metrics in terms of decoding complexity, download cost, and each worker node's computation. Please refer to our Corollaries~\ref{cor:mat} and \ref{cor:poly} in Section~\ref{sec:4} for details.
                \item Particularly, our CDBMM framework is suitable for any small Galois field, i.e., $\mathsf{GR}(p,d)=\mathsf{GF}({p^d})$. Given a distributed order $N$, for a small Galois field $\mathsf{GF}({p^d})$ with $p^d<N$, we have a CDMM scheme over $\mathsf{GF}(p^d)$ that is almost as efficient as the CDMM over the large Galois field $\mathsf{GF}({p^{\log_p N}})$ which is the smallest Galois field to perform EP codes with distributed order $N$. 
                \item At last, we implement our new constructions of CDMM over Galois ring $\mathsf{GR}(p^e,d)$ based on Entangled polynomial codes, Matdot codes, and Polynomial codes. An explicit data comparison of these codes with plane CDMM is presented in Section~\ref{sec:3.3}.
            \end{itemize}
            
            In this work, we only consider efficient computation of single and batch matrix multiplication. Many excellent works also consider secure and private distributed matrix multiplication \cite{Entangle, DBLP:journals/tifs/AliasgariSK20,secureDMM,ASK20,GASP}. Our CDMM based on Entangled polynomial codes over Galois ring $\mathsf{GR}(p^e,d)$ can be extended to secure and private computation and we left it as a future work.

        \subsection{Related work and comparisons}

            Coded distributed batch matrix multiplication and also secure CDBMM have been considered in many works \cite{CSAcodes,CDBMMcapacity,SCDBMM,Entangle}. The state-of-the-art DBMM is the generalized CSA codes (GCSA for short) \cite{CSAcodes}. To differentiate our CDBMM from others, we name it by Batch-$\EP$. According to our results presented in Theorem~\ref{thm 3.2}, we make a comparison Table 1 in Section~\ref{sec:3.2} which shows that, for the batch multiplication of $n$ pairs of matrices, Batch-$\EP$ has a smaller recovery threshold than GCSA by a factor of almost $\frac{1}{2n}$ when keeping the same communications and computational costs (i.e., $\kappa=n$ in Table 1). For the case that GCSA codes get the best recovery threshold (i.e., $\kappa=1$ in Table 1) which is still $n$ times the Recovery threshold in Batch-$\EP$, then the communication and computation costs are about $(\log_p N)/d$ times the corresponding costs of Batch-$\EP$. For $\mathsf{GR}=\mathsf{GR}(2^e,1)=\mathbb{Z}_{2^e}$, then the multiple $(\log_p N)/d=\log N$.
            
            As we have mentioned before, the single CDMM over a small Galois field for a large distributed order $N$ greater than the field size $q$ cannot work. Previously, some related work \cite{AGSCDM,DBLP:journals/corr/abs-2408-01806,SDMMAG} using algebraic geometry codes with at least $N$ rational points to design analog CDMM over $\mathsf{GF}({q})$, while our framework still use codes defined from univariate polynomial in $\mathsf{GF}(q)[x]$. By some fast algorithms for polynomial evaluation and interpolation over Galois ring \cite{von2003modern}, the encoding and decoding complexity in our CDMM are much faster than those based on AG codes. Moreover, due the the genus penalty in AG codes, CDMM based on AG codes also has a larger recovery threshold than univariate polynomial codes. Please refer to Remark~\ref{rmk:AGcomp} for detailed comparisons. 
      
            This paper is organized as follows. Section 2 introduces some preliminary about Galois rings and RMFE. The general framework of our coded distributed batch matrix multiplication via RMFE and comparisons with CDBMM based on GCSA codes are presented in section 3. Section 4 presents the main results of the Single CDMM using different batch prepossessing and comparisons with CDMM based on algebraic geometry codes. Section 5 presents experimental data of our Single CDMM schemes.

    \section{Preliminaries}
        \subsection{Notations.} Let \(p\) be a prime, and \(d\), \(e\), \(m\), \(n\), \(t\), \(r\), and \(s\) be positive integers. Generally, \(p^e\) represents the characteristic of the rings we consider, \(d\) and \(m\) represent the degrees of certain ring extensions, \(n\) denotes the dimension of the vectors to be packed, and \(t\), \(r\), and \(s\) represent the dimensions of the matrices involved in matrix multiplication. Moreover, in the CDMM schemes, we always denote $N$ the total number of worker nodes, $R$ the recovery threshold, and $u,v,w$ the divisors of $t,s,r$, respectively. By our settings, $R\leq N$ and $m\leq \log N$. 
        
        To denote algorithmic complexity, we will use the common big-O notation $O(\cdot)$ and big-theta notation $\Theta(\cdot)$. Moreover, for explicit comparisons, we use the soft O notation $\tilde{O}(\cdot)$ to denote the omission of $\log\log N$ terms, for example, $O(N\log^2 N\log\log N)=\tilde{O}(N\log^2 N)$. Unless otherwise specified, we always assume that all complexity is calculated in the number of elements or operations in the underlying Galois ring $\mathsf{GR}$.
        
        Vectors are denoted in boldface, and element-wise multiplication of vectors is represented by \(\mathbf{a} \star \mathbf{b}\). Matrices are denoted by capital letters, and \(A[i,j]\) represents the element in the \(i\)-th row and \(j\)-th column of matrix \(A\). Specifically, \(A[i,\cdot]\) and \(A[\cdot,j]\) denote the \(i\)-th row and \(j\)-th column of matrix \(A\) respectively.

        \subsection{Galois Rings.} \label{sec:2.1}
            Let $p$ be a prime and $\mathbb{Z}_{p^e}$ be the integer residue classes modulo $p^e$. The Galois ring is an extension of $\mathbb{Z}_{p^e}$ and has characteristic $p^e$. Specifically, let $d\geq 1$ be an integer and $f(x)\in\mathbb{Z}_{p^e}[x]$ be a monic polynomial of degree $d$ such that $\bar{f}(x):=f(x) \bmod p$ is an irreducible polynomial over the finite field $\mathsf{GF}(p)$. The Galois ring of extension degree $d$ over $\mathbb{Z}_{p^e}$ is defined as
                \[\mathsf{GR}(p^e,d):=\mathbb{Z}_{p^e}[x]/(f(x)).\]
            Let $\xi=x+(f(x))\in\mathsf{GR}(p^e,d)$ be a root of $f(x)$. Then any $\alpha\in\mathsf{GR}(p^e,d)$ can be expressed uniquely in the form of $\alpha=a_0+a_1\xi+\dots+a_{d-1}\xi^{d-1}$, $a_i\in\mathbb{Z}_{p^e}$. We refer to \cite{wan2003lectures} for more details about Galois rings. In the following, we briefly use $\mathsf{GR}$ to denote $\mathsf{GR}(p^e,d)$.

            Let $(p)$ be the unique maximal ideal of $\mathsf{GR}(p^e,d)$. Then there is a field isomorphism 
                \[\tau:\ \mathsf{GR}(p^e,d)/(p)\cong \mathsf{GF}({p^d})=\mathsf{GF}(p)[x]/(\bar{f}(x)).\]
            Thus for any $\alpha\in\mathsf{GR}(p^e,d)$, $\alpha$ is a unit if and only if $\alpha \mod p\neq 0$. By the isomorphism $\tau$ and multiplicative structure of $\mathsf{GF}({p^d})^{*}$, there exists an element $\zeta\in\mathsf{GR}(p^e,d)$ of multiplicative order $p^{d}-1$. Let $T=\{0,1,\zeta,\zeta^2,\ldots,\zeta^{p^d-2}\}$, then $T$ is a set of representatives for equivalence classes modulo $(p)$. Hence, for any two distinct elements $\alpha,\alpha'\in T$, we have $\alpha-\alpha'$ is invertible. We call a subset of $\mathsf{GR}$ with this property, i.e., the subtraction of any two distinct elements in it is invertible, an \textbf{exceptional} set. 

            Due to the exceptional property of $T$, the following Lagrange interpolation formula works well on $T$: Assume $\{x_1,x_2,\dots,x_n\}\subset T$. Let $\lambda_i=\prod_{j\neq i}(x_i-x_j)^{-1}$ for $i\in[n]$ and
                  \[f(x) := \sum_{i=1}^n y_i \lambda_i \prod_{\substack{j=1 \\ j \neq i}}^n (x - x_j)\]
                then $f(x_i)=y_i$ for $i=1,2,\dots,n$.
            
           Moreover, there exist fast algorithms for multipoint evaluation and interpolation on $T$.
           \begin{Lem}[\cite{von2003modern}]\label{prop:exp} 
                Let $\mathsf{GR}=\mathsf{GR}(p^e,d)$. Assume $n\leq p^d$ is a positive integer and $\{x_1,x_2,\dots,x_n\}$ is a tuple of $n$ elements in $T$.
                \begin{itemize}
                    \item[(i)] Given any nonzero polynomial $f(x)\in\mathsf{GR}[x]$ of degree less than $n$, then $\{f(x_1),f(x_2),\dots,f(x_n)\}$ can be computed with $O(n\log^2 n\log\log n)$ operations in $\mathsf{GR}$. 
                    \item[(ii)] Conversely, given any \(y_1,\dots, y_n \in \mathsf{GR}\), there is a unique polynomial \(f(x) \in \mathsf{GR}[X]\) of degree less than \(n\) such that \(f(x_i)=y_i\); furthermore, such an $f(x)$ can be computed with $O(n\log^2 n\log\log n)$ operations in $\mathsf{GR}$.
                \end{itemize}
            \end{Lem}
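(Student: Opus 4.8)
The plan is to adapt the classical subproduct--tree algorithms for fast multipoint evaluation and interpolation, exactly as in \cite{von2003modern}, to the Galois ring $\mathsf{GR}$; the only points that need care are that $\mathsf{GR}$ is neither a field nor an integral domain. First I would record the prerequisite that two polynomials of degree less than $n$ over $\mathsf{GR}$ can be multiplied in $M(n)=O(n\log n\log\log n)$ operations in $\mathsf{GR}$: this holds over an arbitrary commutative ring with identity (for instance by the Cantor--Kaltofen / Sch\"onhage--Strassen variant, which needs no special roots of unity), hence in particular over $\mathsf{GR}$. The target bound $O(n\log^2 n\log\log n)$ is then just $O(M(n)\log n)$, and the whole proof reduces to exhibiting the two algorithms with that cost.

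For part (i), build the subproduct tree whose $n$ leaves are the \emph{monic} linear polynomials $x-x_i$ and whose internal nodes are the products of their two children; since $|T|=p^d\ge n$ the $x_i$ may be taken distinct, and every node of the tree is monic. The tree has $O(\log n)$ levels, and the polynomials sitting at any one level have total degree $n$, so (using $M(a)+M(b)\le M(a+b)$) building one level costs $O(M(n))$ and the whole tree $O(M(n)\log n)$. Multipoint evaluation then proceeds by repeatedly replacing $f$ by its remainder modulo the subproduct at the current node while descending toward the leaves; the key observation is that every divisor used is monic, so polynomial division with remainder is well defined over $\mathsf{GR}$ despite $\mathsf{GR}$ not being a field. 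The remaindering again costs $O(M(n)\log n)$ in total, and at the leaves $f\bmod(x-x_i)=f(x_i)$.

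For part (ii), uniqueness follows by showing that a nonzero $g\in\mathsf{GR}[x]$ of degree less than $n$ cannot vanish at $n$ points of the exceptional set $T$: if $g(x_1)=0$ then, since $x-x_1$ is monic, $g=(x-x_1)q$ for some $q\in\mathsf{GR}[x]$ of degree less than $n-1$, and evaluating at $x_i$ for $i\ge 2$ together with the fact that $x_i-x_1$ is a \emph{unit} forces $q(x_i)=0$; iterate. For existence and the complexity bound, the Lagrange coefficients $\lambda_i=\prod_{j\neq i}(x_i-x_j)^{-1}$ are well defined precisely because $T$ is exceptional; I would compute the products $\prod_{j\neq i}(x_i-x_j)=m'(x_i)$, where $m(x)=\prod_j(x-x_j)$ is read off the root of the subproduct tree, by one fast multipoint evaluation of $m'$, invert these units, and then assemble $\sum_i y_i\lambda_i\prod_{j\neq i}(x-x_j)$ by the standard divide--and--conquer linear combination going back up the tree --- which again uses only multiplications of monic polynomials and additions, for a total of $O(M(n)\log n)$.

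The main obstacle --- and essentially the only place the ring structure intervenes --- is checking that every division performed is by a monic polynomial, so that remainders exist and are unique over the non-field $\mathsf{GR}$, and that the sole inversions the algorithm ever needs are of the differences $x_i-x_j$, which are units by the defining property of $T$. Granting these two facts, the degree and running-time bookkeeping is identical to the field case treated in \cite{von2003modern}, and one obtains the stated $O(n\log^2 n\log\log n)$ bounds.
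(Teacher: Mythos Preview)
Your proposal is correct and in fact more detailed than the paper's own ``proof,'' which consists of a single sentence referring the reader to Corollaries~10.8 and~10.12 of \cite{von2003modern}. The subproduct-tree argument you sketch is exactly what those corollaries prove, and your care in isolating the two ring-specific points (divisions are always by monic polynomials; the only inversions needed are of differences $x_i-x_j$, which are units because $T$ is exceptional) is precisely what is needed to port the field-case algorithms to $\mathsf{GR}$.
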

            
            \begin{proof}
                Please refer to \cite[Corollary~10.8 and ~10.12]{von2003modern} for the proof of (i) and (ii), respectively.
            \end{proof}

        \subsection{Reverse Multiplication Friendly Embedding}\label{sec:2.3}
            RMFE was initially proposed in \cite{DBLP:conf/crypto/CascudoCXY18} to reduce the amortized cost, but it only works for finite fields and only supports single multiplication. The work of \cite{DBLP:conf/crypto/CramerRX21} extends RMFE to a more general Galois ring, and multiple multiplications are supported in \cite{DBLP:conf/asiacrypt/EscuderoHLXY23}. In this work, we focus on a single multiplication case over a general Galois ring. Let us first present the definition of RMFE and some useful theorems.
            \begin{Def}[RMFE, {\cite[Definition 17]{DBLP:conf/crypto/CramerRX21}}]
                Let \(\mathsf{GR}(p^e, d)\) be a Galois ring of degree \(d\) and characteristic \(p^e\). A pair \((\phi, \psi)\) is called an $(n, m)$-RMFE over \(\mathsf{GR}(p^e, d)\) if \(\phi : \mathsf{GR}(p^e, d)^n \to \mathsf{GR}(p^e, dm)\) and \(\psi : \mathsf{GR}(p^e, dm) \to \mathsf{GR}(p^e, d)^n\) are two \(\mathsf{GR}(p^e, d)\)-linear maps satisfying
                \[
                    \mathbf{x} \star \mathbf{y} = \psi(\phi(\mathbf{x}) \cdot \phi(\mathbf{y}))
                \]
                for all \(\mathbf{x}, \mathbf{y} \in \mathsf{GR}(p^e, d)^n\).
            \end{Def}

            Many works consider the existence of an $(n,m)$-RMFE over a Galois field. Since we consider fast algorithms for CDMM, we adopted the one in \cite{cramer2020complexity}, which has a quasi-linear time algorithm for RMFE. Then, by the work of \cite{DBLP:conf/crypto/CramerRX21}, we have similar results over a general Galois ring.

            \begin{Lem}[\cite{cramer2020complexity,DBLP:conf/crypto/CramerRX21}]\label{lem:rmfe1}
               Let $\mathsf{GR}(p^e,d)$ be a Galois ring. There exists an algorithm to generate a family of $(n_i,m_i)$-RMFEs over $\mathsf{GR}(p^e,d)$ that can be computed in time $O(m_i\log^2 m_i \log\log m_i)$ as $n_i\rightarrow \infty$ and $\lim_{i\rightarrow \infty}\frac{m_i}{n_i}\leq C$ for some constant $C$.
            \end{Lem}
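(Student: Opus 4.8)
The plan is to prove the statement in two stages: first settle it over the residue field $\mathsf{GF}(p^d)\cong\mathsf{GR}(p^e,d)/(p)$ using the quasi-linear finite-field RMFE construction of \cite{cramer2020complexity}, and then lift the whole family to $\mathsf{GR}(p^e,d)$ using the general-Galois-ring machinery of \cite{DBLP:conf/crypto/CramerRX21}. I would arrange the lift so that it leaves the parameters $(n_i,m_i)$ and the rate bound $C$ untouched, so everything reduces to getting them right over the field.

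\textbf{Stage 1 (the field case).} Over $\mathsf{GF}(p^d)$ I would build the family by the standard two-layer (concatenated) RMFE of \cite{DBLP:conf/crypto/CascudoCXY18}. The inner layer is the elementary polynomial-interpolation RMFE: choosing $n_1\le p^d$ distinct evaluation points in $\mathsf{GF}(p^d)$ and sending a length-$n_1$ vector to the interpolating polynomial of degree $<n_1$, viewed inside $\mathsf{GF}(p^{d m_1})$ with $m_1=2n_1-1$; the product-of-polynomials identity gives the RMFE property, and both $\phi$ and $\psi$ are exactly multipoint evaluation and interpolation, so Lemma~\ref{prop:exp} (with $e=1$) makes them computable in $O(n_1\log^2 n_1\log\log n_1)$ operations --- and since $\mathsf{GF}(p^d)$ is fixed, $n_1$ is a constant and this layer costs $O(1)$. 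The outer layer replaces a line by an algebraic function field $F/\mathsf{GF}(p^{d m_1})$ from a tower attaining a constant fraction of the Drinfeld--Vl\u{a}du\c{t} bound, and replaces bounded-degree polynomials by a Riemann--Roch space, yielding an $(n_2,m_2)$-RMFE over $\mathsf{GF}(p^{d m_1})$ with $n_2\to\infty$ and $m_2/n_2$ bounded by an absolute constant; its two maps are the encoder and an erasure/syndrome decoder of the associated AG code, which \cite{cramer2020complexity} runs in time quasi-linear in $m_2$. Composing the layers --- apply $\phi_1$ blockwise on the $n_2$ coordinates inside $\phi_2$ --- gives an $(n_1 n_2,\,m_1 m_2)$-RMFE over $\mathsf{GF}(p^d)$ whose rate is $(m_1/n_1)(m_2/n_2)\le C=C(p,d)$ and whose evaluation cost, counted in $\mathsf{GF}(p^d)$-operations, is $O(m_i\log^2 m_i\log\log m_i)$ with $m_i=m_1 m_2$. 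For the smallest fields, in particular $\mathsf{GF}(2)$, one first composes with a constant-size RMFE so that $\mathsf{GF}(p^{d m_1})$ is a square field where an explicit optimal tower is available; this only affects $C$.

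\textbf{Stage 2 (lifting to $\mathsf{GR}(p^e,d)$).} Here I would invoke \cite{DBLP:conf/crypto/CramerRX21}: the defining relation $\mathbf{x}\star\mathbf{y}=\psi(\phi(\mathbf{x})\cdot\phi(\mathbf{y}))$ is a system of bilinear identities with coefficients in $\mathsf{GF}(p^d)$, and every ingredient of Stage 1 --- the rational places, an explicit basis of the Riemann--Roch space, the interpolation points, hence the structure constants of $\phi$ and $\psi$ --- admits a canonical lift along $\mathsf{GF}(p^d)\rightsquigarrow\mathsf{GR}(p^e,d)$ (Teichm\"uller/Hensel lift for the constants, together with the lift of the function field used in \cite{DBLP:conf/crypto/CramerRX21}). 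The lifted maps are $\mathsf{GR}(p^e,d)$-linear, have the same parameters $(n_i,m_i)$, and satisfy the RMFE identity because reduction modulo $(p)$ is a ring homomorphism under which the lifted identity becomes the one already verified over $\mathsf{GF}(p^d)$. Finally, arithmetic in $\mathsf{GR}(p^e,d m_i)$ costs only an $e$-dependent polylogarithmic factor more than arithmetic in $\mathsf{GF}(p^{d m_i})$, and $e$ is fixed, so the quasi-linear running time of Stage 1 survives, the extra factor being absorbed into the implied constant.

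\textbf{Where the difficulty lies.} Existence of constant-rate RMFEs is classical; the real content is keeping the running time quasi-linear simultaneously through (i) the concatenation, which requires a genuinely quasi-linear encoder and erasure decoder for the AG code underlying the outer layer --- this is the part \cite{cramer2020complexity} has to work for --- and (ii) the Hensel lift, which must transport an \emph{explicit} basis of the Riemann--Roch space rather than recompute it over the ring, so that no super-polylogarithmic overhead creeps in; this is the technical core of \cite{DBLP:conf/crypto/CramerRX21}. A secondary point is that insisting on a constant $C$ independent of $n$ (rather than merely $m=O(n)$ with a slowly growing factor) is exactly what forces the use of the function-field tower, and hence of all the AG-code algorithmics, in the first place.
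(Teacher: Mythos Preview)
The paper does not actually prove this lemma: it is stated as a citation of \cite{cramer2020complexity,DBLP:conf/crypto/CramerRX21} and immediately followed only by a remark on the value of the constant $C$, with no argument given. Your proposal is therefore not competing against a proof in the paper but rather reconstructing the content of the two cited references, and in that respect it is accurate: the two-stage outline --- quasi-linear concatenated AG-based RMFE over the residue field from \cite{cramer2020complexity}, then the Hensel/Teichm\"uller lift to $\mathsf{GR}(p^e,d)$ from \cite{DBLP:conf/crypto/CramerRX21} preserving $(n_i,m_i)$ and the running time up to an $e$-dependent constant --- is exactly how those works establish the result, and your identification of where the real work lies (quasi-linear AG encoding/decoding, and lifting an explicit Riemann--Roch basis rather than recomputing it) is on point.
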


            \begin{Rmk}
                For $\mathsf{GR}(p^e,1)=\mathbb{Z}_{2^e}$, the upper bound $C$ of the rate of $m/n$ was shown to be $C\approx 5$ \cite{DBLP:conf/crypto/CascudoCXY18}; for a general Galois ring, then $C\approx 8+\frac{16}{\sqrt{p^d}-1}$ by \cite{cramer2020complexity}. In the following, we do not care about the explicit value of $C$ and assume that for $m=O(n)$ there exists an algorithm for $(n,m)$-RMFE with time complexity $O(m\log^2 m\log\log m)$.
            \end{Rmk}

            In Section \ref{sec:3.3}, we need the following concatenation results of RMFEs.
    
            \begin{Lem}[Concatenation,   {\cite[Lemma 5]{DBLP:conf/crypto/CascudoCXY18}} and  {\cite[Lemma 4]{DBLP:conf/asiacrypt/EscuderoHLXY23}}]
                Assume that \((\phi_1, \psi_1)\) is an \((n_1, m_1)\)-RMFE over \(\mathsf{GR}(p^e, dm_2)\) and \((\phi_2, \psi_2)\) is an \((n_2, m_2)\)-RMFE over \(\mathsf{GR}(p^e, d)\). Then $(\phi=\phi_1\circ\phi_2,\psi=\psi_2\circ\psi_1)$ is an $(n_1n_2,m_1m_2)$-RMFE over $\mathsf{GR}(p^e,d)$, where \(\phi: \mathsf{GR}(p^e, d)^{n_1n_2} \to \mathsf{GR}(p^e, dm_1m_2)\) be given by
                \[\begin{split}
                   (\mathbf{x_1}, \dots, \mathbf{x}_{n_1}) & \mapsto (\phi_2(\mathbf{x_1}), \dots, \phi_2(\mathbf{x}_{n_1})) \in \mathsf{GR}(p^e, dm_2)^{n_1} \\
                    &\mapsto \phi_1(\phi_2(\mathbf{x_1}), \dots, \phi_2(\mathbf{x}_{n_1}))
                \end{split}
                \]
                and \(\psi : \mathsf{GR}(p^e, dm_1m_2) \to \mathsf{GR}(p^e, d)^{n_1n_2}\) is given by
                \[\begin{split}
                    \alpha &\mapsto \psi_1(\alpha) = (\mathbf{u}_1, \dots, \mathbf{u}_{n_1}) \in \mathsf{GR}(p^e, dm_2)^{n_1} \\
                    &\mapsto (\psi_2(\mathbf{u}_1), \dots, \psi_2(\mathbf{u}_{n_1})).\end{split}
                \]
            \end{Lem}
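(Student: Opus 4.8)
The plan is to check the two defining requirements of an $(n_1n_2,m_1m_2)$-RMFE over $\mathsf{GR}(p^e,d)$ separately: (a) that $\phi$ and $\psi$ are $\mathsf{GR}(p^e,d)$-linear maps with the stated domains and codomains, and (b) that $\mathbf{x}\star\mathbf{y}=\psi(\phi(\mathbf{x})\cdot\phi(\mathbf{y}))$ for all $\mathbf{x},\mathbf{y}\in\mathsf{GR}(p^e,d)^{n_1n_2}$. The argument is essentially the finite-field proof of \cite[Lemma 5]{DBLP:conf/crypto/CascudoCXY18} run through the Galois-ring tower, so the only genuinely new bookkeeping is set up first. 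Since $d\mid dm_2\mid dm_1m_2$, we have a chain of Galois subrings $\mathsf{GR}(p^e,d)\subseteq\mathsf{GR}(p^e,dm_2)\subseteq\mathsf{GR}(p^e,dm_1m_2)$, and a degree-$m_1$ Galois-ring extension of $\mathsf{GR}(p^e,dm_2)$ is again a Galois ring of characteristic $p^e$ and degree $dm_1m_2$; identifying it with $\mathsf{GR}(p^e,dm_1m_2)$ makes $\phi_1$ a map into $\mathsf{GR}(p^e,dm_1m_2)$ and the blockwise application of $\phi_2$ a map into $\mathsf{GR}(p^e,dm_2)^{n_1}$, so the composite $\phi$ lands where claimed, and dually for $\psi$. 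For linearity, $\phi_2$ is $\mathsf{GR}(p^e,d)$-linear by hypothesis, hence so is its blockwise application; $\phi_1$ is $\mathsf{GR}(p^e,dm_2)$-linear and therefore a fortiori $\mathsf{GR}(p^e,d)$-linear because $\mathsf{GR}(p^e,d)$ sits inside $\mathsf{GR}(p^e,dm_2)$; a composition of $\mathsf{GR}(p^e,d)$-linear maps is $\mathsf{GR}(p^e,d)$-linear, giving (a) for $\phi$, and symmetrically for $\psi$.

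For (b) I would peel off the outer RMFE and then the inner one. Write $\mathbf{x}=(\mathbf{x}_1,\dots,\mathbf{x}_{n_1})$ and $\mathbf{y}=(\mathbf{y}_1,\dots,\mathbf{y}_{n_1})$ with $\mathbf{x}_i,\mathbf{y}_i\in\mathsf{GR}(p^e,d)^{n_2}$, and put $\mathbf{u}=(\phi_2(\mathbf{x}_1),\dots,\phi_2(\mathbf{x}_{n_1}))$ and $\mathbf{v}=(\phi_2(\mathbf{y}_1),\dots,\phi_2(\mathbf{y}_{n_1}))$ in $\mathsf{GR}(p^e,dm_2)^{n_1}$. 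Then $\phi(\mathbf{x})\cdot\phi(\mathbf{y})=\phi_1(\mathbf{u})\cdot\phi_1(\mathbf{v})$, the product taken in $\mathsf{GR}(p^e,dm_1m_2)$ viewed as the degree-$m_1$ extension of $\mathsf{GR}(p^e,dm_2)$. Applying the RMFE identity for $(\phi_1,\psi_1)$ over $\mathsf{GR}(p^e,dm_2)$ gives $\psi_1\bigl(\phi_1(\mathbf{u})\cdot\phi_1(\mathbf{v})\bigr)=\mathbf{u}\star\mathbf{v}$, whose $i$-th coordinate is $\phi_2(\mathbf{x}_i)\cdot\phi_2(\mathbf{y}_i)\in\mathsf{GR}(p^e,dm_2)$. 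Applying $\psi_2$ to each coordinate and invoking the RMFE identity for $(\phi_2,\psi_2)$ over $\mathsf{GR}(p^e,d)$, the $i$-th block becomes $\psi_2\bigl(\phi_2(\mathbf{x}_i)\cdot\phi_2(\mathbf{y}_i)\bigr)=\mathbf{x}_i\star\mathbf{y}_i$. Concatenating over $i=1,\dots,n_1$ yields $\psi\bigl(\phi(\mathbf{x})\cdot\phi(\mathbf{y})\bigr)=(\mathbf{x}_1\star\mathbf{y}_1,\dots,\mathbf{x}_{n_1}\star\mathbf{y}_{n_1})=\mathbf{x}\star\mathbf{y}$, which is exactly the claim; the count $n_1n_2$ inputs and degree $m_1m_2$ then comes for free from the domains already checked in (a).

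The step I expect to need the most care — the only real obstacle — is the tower identification in the preamble: one must be sure that the product $\phi_1(\mathbf{u})\cdot\phi_1(\mathbf{v})$ computed in $\mathsf{GR}(p^e,dm_1m_2)$ is the same as the one computed in the degree-$m_1$ Galois-ring extension of $\mathsf{GR}(p^e,dm_2)$, and that $\mathsf{GR}(p^e,d)$-scalar multiplication agrees whether performed in the small ring or in any larger ring of the chain. This is standard for Galois rings (uniqueness of the Galois ring of given characteristic and degree, and compatibility of iterated unramified extensions), but it is the hinge that lets the two RMFE identities be chained; once it is fixed, the remainder is the routine two-layer substitution carried out above.
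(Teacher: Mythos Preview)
Your argument is correct and is the standard two-layer verification one finds in the cited sources: check $\mathsf{GR}(p^e,d)$-linearity of the composites via the subring chain, then unwind the RMFE identity first for $(\phi_1,\psi_1)$ and then coordinatewise for $(\phi_2,\psi_2)$. Note, however, that the paper does not supply its own proof of this lemma at all---it merely quotes the statement from \cite[Lemma~5]{DBLP:conf/crypto/CascudoCXY18} and \cite[Lemma~4]{DBLP:conf/asiacrypt/EscuderoHLXY23}---so there is nothing in the paper to compare your approach against; your write-up is in fact more detailed than what appears here, and your flagged ``tower identification'' caveat is exactly the point that distinguishes the Galois-ring version from the original finite-field lemma.
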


    \section{Coded distributed batch matrix multiplication via RMFE} \label{sec:bmm}
        In this section, we first introduce a general framework of coded distributed batch matrix multiplication using RMFE. Next, we analyze the amortized cost of the DBMM scheme within our framework and compare it with batch DMM based on GCSA codes.
        
        \subsection{The general framework} \label{sec.3.1}
            Let $\mathsf{GR}:=\mathsf{GR}(p^e,d)$ be an arbitrary Galois ring with characteristic $p^e$. As we have shown in Section~\ref{sec:2.1}, the exceptional set $T$ of $\mathsf{GR}$ has $p^d$ elements in total. When $d$ is small such as $d=1$, by applying known CDMM to $\mathsf{GR}$ for large-scale input matrices \( A \) and \( B \) with entries in \( \mathsf{GR} \), then the distributed order $N\leq p6d$. Thus it is inefficient to perform known CDMM over $\mathsf{GR}$ with small extension degree $d$. To solve this obstacle, we must embed $A$ and $B$ into an extension Galois ring $\mathsf{GR}'=\mathsf{GR}(p^e,dm)$ of extension degree $m$. The smallest $m$ is at least $m\geq\lceil\log_p(N)/d\rceil$, which is close to $O(\log N)$ for constant $p$ and $d$. As a result, a CDMM scheme performed in $\mathsf{GR}'$ will bring extra ${O}(m\log^2 m\log\log m)=\tilde{O}(\log N)$ computational and $O(m)=O(\log N)$ communication overhead since every $\alpha'\in\mathsf{GR}'$ can be viewed as a polynomial of degree less than $m$ in $\mathsf{GR}$. Fortunately, the additional cost can be amortized by RMFE when a batch of large-scale matrix multiplications need to be computed. The general framework is illustrated in Figure~\ref{fig:rmfebmm}.

            \begin{figure*}[h]
                \centering
                \includegraphics[scale=0.9]{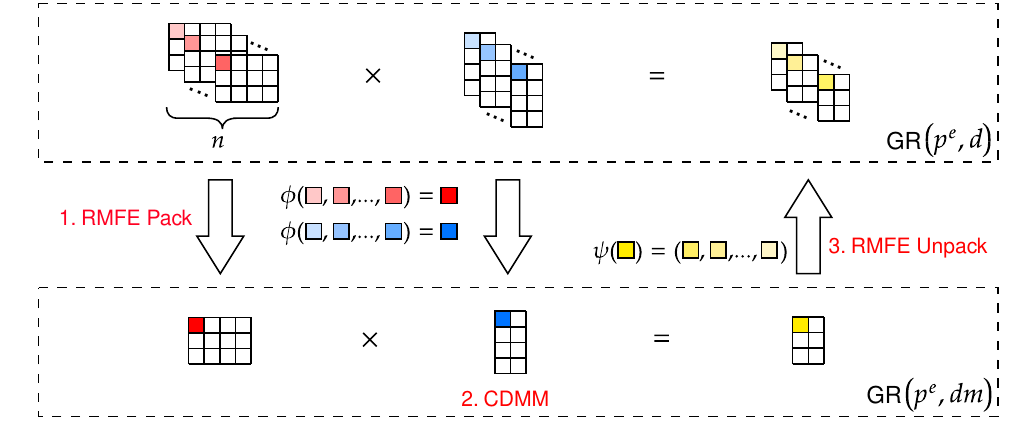}
                \caption{General Framework of Applying RMFE to Batch DMM}
                \label{fig:rmfebmm}
            \end{figure*}

            In the following, we will briefly use $\mathsf{GR}$ to denote the Galois ring $\mathsf{GR}(p^e,d)$ and $\mathsf{GR}_m$ to denote its extension ring $\mathsf{GR}(p^e, dm)$ of extension degree $m$. Assume \((A_1, \dots, A_n)\) and \((B_1, \dots, B_n)\) are two batches of large matrices, where $A_i \in \mathsf{GR}^{t \times r}$ and $B_i \in \mathsf{GR}^{r \times s}$. To compute \(C_i = A_i B_i\) for \(1 \leq i \leq n\) in parallel, we use an \((n, m)\)-RMFE over \(\mathsf{GR}\):
                \[\mathsf{GR}^n\xrightarrow{\phi} \mathsf{GR}_m\xrightarrow{\psi} \mathsf{GR}^n.\] 
            The existence of an $(n,m)$-RMFE is ensured by setting $m=O(n)$ as in Lemma~\ref{lem:rmfe1}. Specifically, we first group all elements at the same position from different matrices in the same batch into one vector, i.e.,
            \[\begin{split}
               &\mathbf{a}_{i,j} = (A_1[i,j],\dots,A_n[i,j]),\\
               &\mathbf{b}_{k,\ell} =(B_1[k,\ell],\dots,B_n[k,\ell]),   
            \end{split}
            \]
            where \(\mathbf{a}_{i,j}, \mathbf{b}_{k,\ell}\in \mathsf{GR}^n.\) Next, under the map \(\phi\), all vectors are mapped into the extension Galois ring \(\mathsf{GR}_m\):
            \[
                \alpha_{i,j}=\phi(\mathbf{a}_{i,j}), \quad\beta_{k,\ell}=\phi(\mathbf{b}_{k,\ell}).
            \]
            Therefore, two batches of matrices $\{A_i\}_{i=1}^n$ and $\{B_i\}_{i=1}^n$ are packed to two matrices $\mathcal{A}$ and $\mathcal{B}$ as follows:
            \[
                \mathcal{A} =
                \begin{pmatrix}
                    \alpha_{1,1} & \cdots & \alpha_{1,r} \\
                    \vdots & \ddots & \vdots \\
                    \alpha_{t,1} & \cdots & \alpha_{t,r}
                \end{pmatrix}
                , \quad
                \mathcal{B} =
                \begin{pmatrix}
                    \beta_{1,1} & \cdots & \beta_{1,s} \\
                    \vdots & \ddots & \vdots \\
                    \beta_{r,1} & \cdots & \beta_{r,s}
                \end{pmatrix}.
            \]
            Assume we have CDMM schemes over the extension ring $\mathsf{GR}_m$, such as EP codes. Then by applying CDMM with input $\mathcal{A}$ and $\mathcal{B}$, the master node get \(\mathcal{C} = \mathcal{AB}\). Finally, by using the map \(\psi\) elementwise to $\mathcal{C} $, i.e.,

            \begin{align}
                \psi(\mathcal{C}[i,\ell])&=(\psi(\sum_{j=1}^r \alpha_{i,j}\beta_{j,\ell}))=(\sum_{j=1}^r\psi( \alpha_{i,j}\beta_{j,\ell}))\nonumber\\
                    &=(\sum_{j=1}^r\psi( \phi(\mathbf{a}_{i,j})\phi(\mathbf{b}_{j,\ell}))=(\sum_{j=1}^r \mathbf{a}_{i,j}\star\mathbf{b}_{j,\ell})\nonumber\\
                    &=(\sum_{j=1}^r A_1[i,j]B_1[j,\ell], \dots, \sum_{j=1}^rA_n[i,j]B_n[j,\ell])\nonumber\\
                    &=(A_1[i,\cdot]B_1[\cdot,\ell], \dots, A_n[i,\cdot]B_n[\cdot,\ell])\nonumber\\
                    &=(C_1[i,\ell], \dots, C_n[i,\ell])\nonumber.
            \end{align}
            This yields the expected result $(C_1,C_2,\dots,C_n)$. In the following subsection, we will provide a detailed explanation of how the above framework can be applied to batch matrix multiplication based on EP codes.

        \subsection{Batch DMM based on EP codes} \label{sec:3.2}
            Let $\mathcal{A}$ and $\mathcal{B}$ be the packed matrices of $\{A_i\}_{i=1}^n$  and $\{B_i\}_{i=1}^n$ under $\phi$, respectively. By the CDMM based on EP codes \cite{DBLP:journals/tit/YuMA20}, \(\mathcal{A}\) and \(\mathcal{B}\) are partitioned into \(uw\) and \(vw\) submatrices as follows:
            \[
                \mathcal{A} =
                \begin{pmatrix}
                    \mathcal{A}_{11} & \cdots & \mathcal{A}_{1w} \\
                    \vdots & \ddots & \vdots \\
                    \mathcal{A}_{u1} & \cdots & \mathcal{A}_{uw}
                \end{pmatrix}
                , \quad
                \mathcal{B} =
                \begin{pmatrix}
                    \mathcal{B}_{11} & \cdots & \mathcal{B}_{1v} \\
                    \vdots & \ddots & \vdots \\
                    \mathcal{B}_{w1} & \cdots & \mathcal{B}_{wv}
                \end{pmatrix}
                ,
            \]
            where \(\mathcal{A}_{ij}\in \mathsf{GR}_m^{\frac{t}{u}\times \frac{r}{w}}\) and \(\mathcal{B}_{k\ell}\in \mathsf{GR}_m^{\frac{r}{w}\times \frac{s}{v}}\). Then, the matrix product \(C\) is computed as:
            \[
                \mathcal{C} =
                \begin{pmatrix}
                    \mathcal{C}_{11} & \cdots & \mathcal{C}_{1v} \\
                    \vdots & \ddots & \vdots \\
                    \mathcal{C}_{u1} & \cdots & \mathcal{C}_{uv}
                \end{pmatrix}
                ,
            \]
            where \(\mathcal{C}_{i\ell} = \sum_{j=1}^w \mathcal{A}_{ij} \mathcal{B}_{j\ell}\). 
            
            The master node first chooses \(N\leq p^{dm}\) pairwise distinct points \(\alpha_1,\dots,\alpha_N\) from the exceptional set of \(\mathsf{GR}_m\) and then constructs two polynomials (with matrix coefficients):
            \[\begin{split}
               & f(x) := \sum_{i=1}^{u} \sum_{j=1}^{w}  \mathcal{A}_{ij}x^{(i-1)w +j-1}, \\
               & g(x) := \sum_{k=1}^{w} \sum_{\ell=1}^{v}\mathcal{B}_{k\ell} x^{w-k + (\ell-1) uw}, \end{split}
            \]
            Then
            \[ \begin{split}
                h(x) &:= f(x)g(x) \\
                &=  \sum_{i=1}^{u} \sum_{j=1}^{w} \sum_{k=1}^{w} \sum_{\ell=1}^{v}\mathcal{A}_{ij} \mathcal{B}_{k\ell} x^{(i-1)w + (w-1 -k+j) + (\ell-1)uw}. \end{split}
            \]
            Next, the master node computes and sends the evaluations \(f(\alpha_i),g(\alpha_i)\) to the \(i\)-th worker node for \(1\le i\le N\). Then \(i\)-th worker node computes the small matrix multiplication of the two evaluations \(h(\alpha_i)=f(\alpha_i)g(\alpha_i)\) locally. The master node only needs to download the values of $h(\alpha_i)$ from any $\deg(h)+1$ successful worker nodes to interpolates \(h(x)\in \mathsf{GR}_m^{\frac{t}{u}\times\frac{s}{v}}[x]\). Then the product \(\mathcal{C}\) can be recovered from the coefficients of $h(x)$. Hence the recovery threshold is \(R=\deg(h) + 1 =uvw + w -1\).

            The encoding and decoding complexity, communication overhead (upload/download), and computational load at each worker are described in detail in \cite{DBLP:journals/tit/YuMA20}. For comparison, we present the results as follows. The computational complexity is counted as the total number of operations in $\mathsf{GR}=\mathsf{GR}(p^e,d)$, and the communication cost is counted as the total number of elements in $\mathsf{GR}$.

            \begin{Lem}[EP Codes \cite{DBLP:journals/tit/YuMA20}] \label{Lem:3.1}
                Given two matrices $\mathcal{A} \in\mathsf{GR}(p^e, d)^{t \times r}$ and $\mathcal{B} \in\mathsf{GR}(p^e, d)^{r \times s}$, let $u,v,w$ be the partition parameters that are factors of $t,s,r$, respectively. By applying EP codes of length $N$ to compute $\mathcal{C} = \mathcal{A}\mathcal{B}$, the recovery threshold $R=uwv + w - 1$. Let $m=\lceil (\log_pN)/d\rceil$. The communication costs are
                \begin{itemize}
                    \item[--] Upload: $O\left(\frac{tvr+sur}{uvw}) Nm\right)$. 
                    \item[--] Download: $O\left(\frac{ts}{uv} R m\right)$.
                \end{itemize}
                Moreover, the computational complexities are 
                \begin{itemize}
                    \item[--] Encoding: $\tilde{O}\left(\left(\frac{tvr+sur}{uvw}\right)mN\log^2 N \right)$.
                    \item[--] Decoding: $\tilde{O}\left(\frac{ts}{uv} mR\log^2 R \right)$.
                \item[--] The computation of each worker node: $\tilde{O}\left(\frac{trs}{uwv}m\right)$.
                \end{itemize}
            \end{Lem}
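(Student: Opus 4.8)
The plan is to run EP codes over the extension ring $\mathsf{GR}_m=\mathsf{GR}(p^e,dm)$ — in which, as in the construction preceding the statement, the matrices $\mathcal{A},\mathcal{B}$ over $\mathsf{GR}\subseteq\mathsf{GR}_m$ are viewed entrywise — and then to translate the resulting $\mathsf{GR}_m$-complexities into $\mathsf{GR}$-complexities. The argument rests on three facts: a degree count for $h(x)=f(x)g(x)$, the fast multipoint evaluation/interpolation of Lemma~\ref{prop:exp}, and the elementary cost of arithmetic in $\mathsf{GR}_m$ relative to $\mathsf{GR}$. The recovery threshold is essentially already handled in the text: in the exponent $(i-1)w+(w-1-k+j)+(\ell-1)uw$ the three summands are maximized independently at $i=u$, $(j,k)=(w,1)$, $\ell=v$, so $\deg h=uvw+w-2$ (the coefficient of $x^{uvw+w-2}$ being $\mathcal{A}_{uw}\mathcal{B}_{1v}$, nonzero in general), hence $R=\deg h+1=uvw+w-1$. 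Since $m=\lceil(\log_pN)/d\rceil$ forces $p^{dm}\ge N$, the exceptional set $T\subseteq\mathsf{GR}_m$ genuinely contains the $N\ (\ge R)$ pairwise distinct nodes $\alpha_1,\dots,\alpha_N$ the scheme requires, so all of the interpolation machinery of Section~\ref{sec:2.1} and Lemma~\ref{prop:exp} is available over $\mathsf{GR}_m$.

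Next I would record the cost-translation step. Writing $\mathsf{GR}_m=\mathsf{GR}[y]/(\tilde f(y))$ with $\deg\tilde f=m$, each element of $\mathsf{GR}_m$ is stored as $m$ elements of $\mathsf{GR}$; an addition in $\mathsf{GR}_m$ costs $m$ operations in $\mathsf{GR}$, and a multiplication costs $\tilde{O}(m)$ operations in $\mathsf{GR}$ (fast polynomial multiplication over $\mathsf{GR}$ followed by reduction modulo $\tilde f$; the extra $\log m=O(\log\log N)$ factor is absorbed by $\tilde{O}$ because $m=O(\log N)$). Consequently any procedure using $\mu$ operations in $\mathsf{GR}_m$ uses $\tilde{O}(\mu m)$ operations in $\mathsf{GR}$, and any block of $k$ elements of $\mathsf{GR}_m$ amounts to $km$ elements of $\mathsf{GR}$. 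This single observation is exactly what inserts the factor $m$ into every bound and is the only gap between the present Galois-ring statement and the field statement of \cite{DBLP:journals/tit/YuMA20}.

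Then I would read off the six quantities. For communication, $f(\alpha_i)$ and $g(\alpha_i)$ together have $\frac{tr}{uw}+\frac{rs}{vw}=\frac{tvr+sur}{uvw}$ entries in $\mathsf{GR}_m$, sent to $N$ workers, giving upload $O\!\left(\frac{tvr+sur}{uvw}Nm\right)$; $h(\alpha_i)$ has $\frac{ts}{uv}$ entries, downloaded from $R$ workers, giving $O\!\left(\frac{ts}{uv}Rm\right)$. For encoding, each of the $\frac{tvr+sur}{uvw}$ scalar coordinate polynomials (degree $<uw$, resp.\ $<vw$, hence $<N$ after zero-padding) is evaluated at the $N$ nodes; Lemma~\ref{prop:exp}(i), applied with the ring $\mathsf{GR}_m$ (legitimate since $N\le p^{dm}$), costs $O(N\log^2 N\log\log N)$ operations in $\mathsf{GR}_m$ per polynomial, hence $\tilde{O}\!\left(\frac{tvr+sur}{uvw}mN\log^2 N\right)$ in $\mathsf{GR}$. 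For decoding, Lemma~\ref{prop:exp}(ii) recovers each of the $\frac{ts}{uv}$ coordinate polynomials of $h$ (degree $<R$) from $R$ values in $O(R\log^2 R\log\log R)$ operations in $\mathsf{GR}_m$, i.e.\ $\tilde{O}\!\left(\frac{ts}{uv}mR\log^2 R\right)$ in $\mathsf{GR}$, after which $\mathcal{C}$ is read directly off the coefficients. Finally each worker computes one $\frac{t}{u}\times\frac{r}{w}$ by $\frac{r}{w}\times\frac{s}{v}$ product over $\mathsf{GR}_m$, i.e.\ $\frac{trs}{uwv}$ multiplications in $\mathsf{GR}_m$, costing $\tilde{O}\!\left(\frac{trs}{uwv}m\right)$ in $\mathsf{GR}$.

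The part that needs the most care — and the only step that is not pure bookkeeping — is justifying that Lemma~\ref{prop:exp} really applies over $\mathsf{GR}_m$: the subproduct-tree algorithms underlying fast evaluation and interpolation require every node set, and every sub-block produced in the divide-and-conquer recursion, to be exceptional, and the relevant sizes to stay at most $p^{dm}$. Both hold here because the $\alpha_i$ are drawn from the exceptional set $T$ of $\mathsf{GR}_m$ (any subset of an exceptional set is again exceptional, invertibility of differences being inherited) and $N\le p^{dm}$; hence the field-case analysis of \cite{von2003modern} transfers verbatim to $\mathsf{GR}_m$. Combining this with the $\tilde{O}(m)$ cost-translation of the second paragraph yields all of the claimed bounds.
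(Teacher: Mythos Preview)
Your proposal is correct and follows essentially the same approach as the paper: run EP codes over the extension ring $\mathsf{GR}_m$ (with $m$ chosen so that the exceptional set has size at least $N$), invoke Lemma~\ref{prop:exp} for fast evaluation/interpolation to get the $\mathsf{GR}_m$-costs, and then convert every $\mathsf{GR}_m$-operation (resp.\ element) into $\tilde{O}(m)$ operations (resp.\ $m$ elements) in $\mathsf{GR}$. Your write-up is in fact more explicit than the paper's in justifying the degree count for $R$, the coordinate-wise application of Lemma~\ref{prop:exp}, and the applicability of the subproduct-tree algorithms over $\mathsf{GR}_m$, but the underlying argument is identical.
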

            \begin{proof}
                The recovery threshold is the same as the original EP codes performed in a Galois field. We set $m=\lceil \frac{\log_p N}{d}\rceil$ to support EP codes works well on the extension Gaois ring $\mathsf{GR}_m=\mathsf{GR}(p^e, dm)$. Let $\mathbb{A}=\{x_1,x_2,\dots,x_N\}$ be a subset of the maximal exception set $T$ of $\mathsf{GR}_m$. By the fast polynomial evaluation and interpolation over $\mathbb{A}$, the encoding and decoding cost $O\left(\left(\frac{t}{u} + \frac{s}{v}\right)\frac{r}{w}N\log^2 N\log\log N \right)$ and $O\left(\frac{ts}{uv} R \log^2 R\log\log R \right)$ operations in $\mathsf{GR}_m$, respectively. As $[\mathsf{GR}_m: \mathsf{GR}]=m$, each operation in $\mathsf{GR}_m$ costs at most $O(m\log^2 m\log\log m)$ operations in $\mathsf{GR}$ \cite{von2003modern}. The desired encoding and decoding complexity are derived by omitting $\log\log N$ terms in the soft-O notation. The extra multiple $O(m)$ in the communication cost is also caused by the overhead of representing an element in $\mathsf{GR}_m$ as combinations of $m$ elements in $\mathsf{GR}$. 
            \end{proof}

            Let Batch-$\EP$ denote the batch CDMM described as in Figure 1 by combining EP codes over the Galois ring. Then, according to the computational complexity of RMFE andlemma~\ref{Lem:3.1}, we get the computational and communication complexity of our Batch Distributed Matrix Multiplication as follows: 

            \begin{Thm}[Batch DMM via EP Codes]\label{thm 3.2}
                Assume \((A_1, \dots, A_n)\) and \((B_1, \dots, B_n)\) are two batches of matrices, where each $A_i \in\mathsf{GR}(p^e, d)^{t \times r}$ and $B_i \in\mathsf{GR}(p^e, d)^{r \times s}$. Assume $m=\lceil (\log_pN)/d\rceil=O(n)$. Then the recovery threshold of $\mathsf{Batch}$-$\EP$ is $R=uwv + w - 1$. Moreover, the amortized communication cost (per matrix multiplication) is summarized as follows
                \begin{itemize}
                    \item[--] The upload cost is $O\left(\left(\frac{tr}{uw} + \frac{rs}{wv}\right) N\right)$. 
                    \item[--] The download cost is $O\left(\frac{ts}{uv} R\right)$.
                \end{itemize}
                The amortized computational complexities are
                \begin{itemize}
                    \item[--] Encoding: $\tilde{O}\left(\left(\frac{trv+sru}{uvw}\right)N \log^2 N \right)$.
                    \item[--] Decoding: $\tilde{O}\left(\frac{ts}{uv} R\log^2 R \right)$.
                \item[--] The computation of each worker node: $\tilde{O}\left(\frac{trs}{uwv}\right)$.
                \end{itemize}
            \end{Thm}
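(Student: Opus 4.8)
\emph{Proof plan.} The idea is to reduce the statement entirely to Lemma~\ref{Lem:3.1} applied to the packed matrices, adding only a bookkeeping of the RMFE overhead. First I would recall that correctness of $\mathsf{Batch}$-$\EP$ is already established by the computation displayed in Section~\ref{sec.3.1}: applying $\psi$ entrywise to $\mathcal{C}=\mathcal{A}\mathcal{B}$ returns $(C_1[i,\ell],\dots,C_n[i,\ell])$ in each coordinate, so the master indeed recovers $(C_1,\dots,C_n)$. Hence the recovery threshold of $\mathsf{Batch}$-$\EP$ coincides with that of the EP code run on $\mathcal{A}\in\mathsf{GR}_m^{t\times r}$ and $\mathcal{B}\in\mathsf{GR}_m^{r\times s}$; from the construction of $f,g,h$ in Section~\ref{sec:3.2} this is $R=\deg(h)+1=uwv+w-1$, which I would simply restate.

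Next I would split the cost of $\mathsf{Batch}$-$\EP$ into three phases: (i) \emph{packing}, i.e.\ forming $\mathcal{A},\mathcal{B}$ by applying $\phi$ to the $tr+rs$ position-vectors $\mathbf{a}_{i,j},\mathbf{b}_{k,\ell}\in\mathsf{GR}^n$; (ii) running EP codes of length $N$ on $(\mathcal{A},\mathcal{B})$ over $\mathsf{GR}_m$; and (iii) \emph{unpacking}, i.e.\ applying $\psi$ to the $ts$ entries of $\mathcal{C}$. For phase (ii) I would invoke Lemma~\ref{Lem:3.1} verbatim with $m=\lceil(\log_pN)/d\rceil$: it already outputs, in operations/elements of $\mathsf{GR}$, the upload $O((\tfrac{tvr+sur}{uvw})Nm)$, download $O(\tfrac{ts}{uv}Rm)$, encoding $\tilde O((\tfrac{tvr+sur}{uvw})mN\log^2N)$, decoding $\tilde O(\tfrac{ts}{uv}mR\log^2R)$, and per-worker cost $\tilde O(\tfrac{trs}{uwv}m)$, the factor $m$ being $[\mathsf{GR}_m:\mathsf{GR}]$. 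For phases (i) and (iii), Lemma~\ref{lem:rmfe1} gives that each evaluation of $\phi$ or $\psi$ costs $O(m\log^2m\log\log m)$ operations in $\mathsf{GR}$, so these phases cost $\tilde O((tr+rs)m)$ and $\tilde O(ts\,m)$ respectively, where $\log^2m\log\log m$ is a $\log\log N$-type term swallowed by $\tilde O(\cdot)$ because $m\le\log N$.

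Then I would observe that phases (i) and (iii) are dominated by the encoding and decoding terms of phase (ii) (using $N\ge R\ge uw$, so the $N\log^2N$, $R\log^2R$ factors more than absorb $tr+rs$ and $ts$), whence the total non-amortized cost of $\mathsf{Batch}$-$\EP$ equals, up to constants, the cost of Lemma~\ref{Lem:3.1}. Dividing each of the five quantities by $n$ and using the hypothesis $m=O(n)$ — so that every factor $m$ in the phase-(ii) bounds is $O(n)$ and cancels — produces exactly the claimed amortized figures: upload $O((\tfrac{tr}{uw}+\tfrac{rs}{wv})N)$, download $O(\tfrac{ts}{uv}R)$, encoding $\tilde O((\tfrac{trv+sru}{uvw})N\log^2N)$, decoding $\tilde O(\tfrac{ts}{uv}R\log^2R)$, per-worker $\tilde O(\tfrac{trs}{uwv})$.

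The only genuinely delicate point — and the place I expect to spend the most care — is reconciling the two constraints on $m$: EP codes over $\mathsf{GR}_m$ need $N\le p^{dm}$, i.e.\ $m\ge\lceil(\log_pN)/d\rceil$, while phases (i) and (iii) need an $(n,m)$-RMFE with $m=O(n)$. This is precisely the standing assumption $m=\lceil(\log_pN)/d\rceil=O(n)$ together with the existence statement of Lemma~\ref{lem:rmfe1}; once that is invoked, and once one has checked that the $\phi,\psi$ overhead truly lies inside $\tilde O(\cdot)$, the remainder is routine arithmetic on the bounds of Lemmas~\ref{Lem:3.1} and~\ref{lem:rmfe1}.
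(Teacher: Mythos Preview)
Your proposal is correct and follows essentially the same route as the paper's own proof: invoke Lemma~\ref{Lem:3.1} for the EP phase over $\mathsf{GR}_m$, bound the $\phi$/$\psi$ overhead via Lemma~\ref{lem:rmfe1}, check that the RMFE overhead is dominated by the EP encoding/decoding terms, and then amortize by $n$ using $m=O(n)$. Your three-phase decomposition and explicit attention to the compatibility of the two constraints on $m$ are, if anything, slightly more carefully organized than the paper's version, but there is no substantive difference in the argument.
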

            \begin{proof}
                According to the Batch-$\EP$ framework illustrated in Figure 1, the recovery threshold, communication cost, and computational complexity for each worker node are equivalent to those in EP codes over $\mathsf{GR}_m$. By setting \( m = \lceil (\log_p N)/d \rceil = O(n) \), the amortized communication cost (and each worker's computational cost) per matrix multiplication is reduced by  \( n \), yielding the desired efficiency gains.

                All that remains is to first calculate the total encoding and decoding complexity by adding the computational cost of RMFE and then amortize it by $n$. 
                For encoding, the master node needs to compute \( \phi(\mathbf{a}_{i,j}) \) and \( \phi(\mathbf{b}_{k,\ell}) \) for \( [i,j,k,\ell] \in [1,t] \times [1,r] \times [1,r] \times [1,s] \). Since the complexity to compute one function \( \phi \) is \( O(m \log^2 m\log\log m) \). Hence, the amortized complexity to compute \( \mathcal{A} \) and \( \mathcal{B} \) are \( O((t+s)rm \log^2 m\log\log m)=\tilde{O}((t+s)r\log N)\) since $m=O(\log N)$. Then, by the encoding complexity of EP codes as in Lemma~\ref{Lem:3.1}, the amortized encoding complexity per matrix multiplication in Batch-$\EP$ is 
                \[\begin{split}
                &\tilde{O}\left(\left(\frac{trv+sru}{uvw}\right) N \log^2 N + (t+s)r \log N\right) \\
                &= \tilde{O}\left(\left(\frac{trv+sru}{uvw}\right)  N \log^2 N \right),    
                \end{split}
                \]
                
                During the decoding process in Batch-$\EP$, the master node needs to compute $$\psi(\mathcal{C})=\big(\psi(\mathcal{C}[i,\ell])\big)_{1\leq i\leq s, 1\leq \ell\leq t},$$ where
                \( \mathcal{C} = \mathcal{AB}\in \mathsf{GR}(p^e, dm)^{t \times s},\) the  complexity to compute one function \( \psi \) is \( O(m \log^2 m\log\log m) \). Hence, the amortized complexity to the matrix \( \psi(\mathcal{C}) \) is \(\tilde{O}(ts m)\). By the decoding complexity of EP codes, the amortized decoding complexity is \( \tilde{O}\left( \frac{ts}{uv} R \log^2 R + ts m\right) = \tilde{O}\left( \frac{ts}{uv} R \log^2 R \right) \).
            \end{proof}

            Let $N, R, n$ and $u,v,w$ be defined as in Theorem~\ref{thm 3.2}. We now compare our results with GCSA codes for batch matrix multiplication. The smallest Galois field $\mathsf{GF}(q)$ to support GSCA codes must have size $q\geq N+n$. Thus the smallest Galois ring $\mathsf{GR}_m$ must have exceptional set of size $p^{dm}\geq N+n$ to support GSCA, while in our Batch-$\EP$, we only need $p^{dm}\geq N$. To keep consistency, we assume $\mathsf{GR}_m$ is sufficiently large for GCSA and Batch-$\EP$ and take $n=\Theta(m)$. In Table 1, $\kappa$ is a positive divisor of $n$ defined as in \cite{CSAcodes}, and we omit terms of $\log\log N$ in the big $\tilde{O}$ notation. We use bold items to indicate better parameters   
            \begin{table*}[ht]
                \centering
                \caption{Comparison of Batch-coded matrix multiplication over Galois ring with GCSA code.}
                \begin{tabular}{|c|c|c|}
                    \hline
                       &  GCSA\cite{CSAcodes} & Batch-$\EP$ \\ \hline
                     Recovery threshold $R$  & $uvw(n+\kappa-1)+w-1$ & $\mathbf{uvw+w-1}$ \\ \hline
                     Upload complexity & $(\frac{trv+sru}{uvw} )\frac{n}{\kappa}N$ & $\mathbf{(\frac{trv+sru}{uvw})N}$ \\ \hline
                     Download complexity & $\frac{ts}{uv}R$ & $\frac{ts}{uv}R$\\ \hline
                    Worker node's computation & $\tilde{O}(\frac{trs}{uvw}\frac{n}{\kappa})$ & $\mathbf{\tilde{O}(\frac{trs}{uvw})}$ \\ \hline
                     Encoding complexity & $\tilde{O}((\frac{trv+sru}{uvw})\frac{n}{\kappa}N\log^2 N)$ & $\mathbf{\tilde{O}((\frac{trv+sru}{uvw} )N\log^2 N)}$\\ \hline
                     Decoding complexity & $\tilde{O}(\frac{ts}{uv}\frac{n}{\kappa}R\log^2 R)$ & $\mathbf{ \tilde{O}(\frac{ts}{uv}R\log^2 R)}$ \\ \hline
                \end{tabular}
                \label{tab:my_label}
            \end{table*}  

            \begin{Rmk}
               Additionally, we can use Polynomial codes \cite{DBLP:conf/nips/YuMA17} or MatDot codes \cite{DBLP:journals/tit/DuttaFHJCG20} in place of EP codes within our CDBMM framework.  The encoding and decoding complexity, communication overhead (upload/download), and computational load at each worker are similar to those in Theorem~\ref{thm 3.2}. When using Polynomial codes, $w = 1$, while with MatDot codes, $u = v = 1$.
            \end{Rmk}

    \section{Optimize Single Distributed Matrix Multiplication via RMFE} \label{sec:3.3}
        Despite the batch scenario, the technique can also be applied into optimizing the single matrix multiplication. Notice that in the first step of code-based DMM, the master node splits original matrices into submatrices. After the partition, to compute the product of original two matrices, a series of product of submatrices are needed to be computed in parallel. This observation suggests that for a single large matrix multiplication, we can split the original matrix to “manually” create a batching scenario, and then apply RMFE to optimize it. Then, we will delve into optimizing a single DMM via RMFE and analyze its associated overhead. 
       
        We still discuss the constructions over a general Galois ring $\mathsf{GR}=\mathsf{GR}(p^e, d)$ and let $\mathsf{GR}_m$ be its extension ring of extension degree $m$. In the following construction of $\EP$-I for single matrix multiplication of $A$ and $B$, we partition $A$ and $B$ as the way in MatDot codes, then $AB=\sum_{i=1}^n A_i B_i$ is the inner product. We first compute batch matrix multiplications $A_1B_1, A_2B_2,\dots, A_nB_n$ by using our batch-$\EP$ and then add them to get $AB$. 

        \medskip

        \fbox{
            \begin{minipage}{0.95\linewidth}
                \begin{center} {\bf $\EP$-I}\end{center}
                \begin{itemize}
                    \item[--] Input: \( A \in \mathsf{GR}^{t \times r} \) and \( B \in \mathsf{GR}^{r \times s} \).
                    \item[--] Output: $AB$.
                    \item[(1)] Let $N$ be the total number of Worker nodes and $m=\lceil \log_p(N)/d\rceil$. Assume $n=\Theta(m)$ is a divisor of $r$. Then \( A \) and \( B \) can be partitioned as follows
                                \[
                                    A = \begin{pmatrix}
                                        A_1 & \cdots & A_n
                                    \end{pmatrix}, \quad 
                                    B =
                                    \begin{pmatrix}
                                        B_1 \\
                                        \vdots \\
                                        B_n
                                    \end{pmatrix}
                                    ,
                                \]
                                where \(A_i\in\mathsf{GR}^{t\times \frac{r}{n}}\) and \(B_i\in\mathsf{GR}^{\frac{r}{n}\times s}\).
                    \item[(2)] Call Batch-$\EP$ with input $\{A_i\}_{i=1}^n$ and $\{B_i\}_{i=1}^n$. Then output $\sum_{i=1}^n C_i$, where $C_i=A_iB_i$ for $i=1,2,\dots,n$.
                \end{itemize}
            \end{minipage}
        }

        \medskip

        \begin{Cor}\label{cor:mat}
            Let $A, B, N, m, n$ be defined as in $\EP$-I. Assume $u,v,w$ are divisors of $s,t$, and $\frac{r}{n}$, respectively, such that $R:=uvw+w-1\leq N$. The recovery threshold in the above $\EP$-I for single matrix multiplication is $R$. Moreover, the communication complexities are 
            \begin{itemize} 
                \item[--] Upload: $O\left((\frac{trv+sru}{uvw} ) N\right)$.
                \item[--] Download: $O\left(\frac{ts}{uv}mR\right)$. 
            \end{itemize}
            The computational costs are
            \begin{itemize}
                \item[--] Encoding: $\tilde{O}(\left(\frac{trv+sru}{uvw} \right) N\log^2 N )$.
                \item[--] Decoding: $\tilde{O}(\frac{ts}{uv}m R\log^2 R)$.
                \item[--] The computation of each worker node: $\tilde{O}\left(\frac{tsr}{uvw}\right)$.
            \end{itemize}
        \end{Cor}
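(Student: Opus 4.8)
The plan is to deduce every claim from Theorem~\ref{thm 3.2}, applied to the single batch produced in step~(1) of $\EP$-I, and then to do the bookkeeping needed to pass from the \emph{amortized} (per-product) figures of that theorem to the \emph{total} cost of the one Batch-$\EP$ call that $\EP$-I makes.

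First I would settle correctness and the recovery threshold. Since $A=(A_1\mid\cdots\mid A_n)$ is a column partition and $B=(B_1;\cdots;B_n)$ the matching row partition, block multiplication gives $AB=\sum_{i=1}^n A_iB_i=\sum_{i=1}^n C_i$, so step~(2) returns $AB$. For the threshold: $\EP$-I runs exactly one instance of Batch-$\EP$; a worker is successful precisely when it returns its evaluation $h(\alpha_i)$, and any $R=uvw+w-1$ of these determine $h(x)$ (hence all $C_i$) by interpolation on the exceptional set of $\mathsf{GR}_m$, after which the master forms $\sum_i C_i$ using no further worker data. Thus the recovery threshold of $\EP$-I equals that of the underlying Batch-$\EP$ instance, namely $R=uvw+w-1\le N$.

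For the cost estimates I would invoke Theorem~\ref{thm 3.2} with the inner dimension specialized to $r/n$ (each $A_i\in\mathsf{GR}^{t\times(r/n)}$, $B_i\in\mathsf{GR}^{(r/n)\times s}$) and batch size $n$, with $u,v,w$ the EP partition parameters of the $A_i,B_i$. Since Theorem~\ref{thm 3.2} reports amortized figures while $\EP$-I performs the whole batch as a single CDMM, the actual cost of $\EP$-I equals $n$ times each of the amortized upload, download, encoding, decoding, and per-worker figures of Theorem~\ref{thm 3.2}, evaluated with $r$ replaced by $r/n$. Then: in the upload and encoding bounds the factor $n$ exactly cancels the $1/n$ coming from $r/n$, giving $O\!\big(\tfrac{trv+sru}{uvw}N\big)$ and $\tilde{O}\!\big(\tfrac{trv+sru}{uvw}N\log^2 N\big)$; the download and decoding bounds do not involve the inner dimension, so the factor $n$ survives and, using $n=\Theta(m)$, becomes the factor $m$ in $O\!\big(\tfrac{ts}{uv}mR\big)$ and $\tilde{O}\!\big(\tfrac{ts}{uv}mR\log^2 R\big)$; and the per-worker cost $n\cdot\tilde{O}\!\big(\tfrac{t(r/n)s}{uvw}\big)=\tilde{O}\!\big(\tfrac{trs}{uvw}\big)$.

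Finally I would check that the only work $\EP$-I adds on top of Batch-$\EP$, namely forming $\sum_{i=1}^n C_i$ with each $C_i\in\mathsf{GR}^{t\times s}$, costs $O(nts)=\tilde{O}(ts\log N)$ operations in $\mathsf{GR}$ (as $n=\Theta(m)=O(\log N)$), which is dominated by the decoding bound $\tilde{O}(\tfrac{ts}{uv}mR\log^2 R)$ and hence affects none of the stated complexities. I do not expect a genuinely hard step: all the substance is in Theorem~\ref{thm 3.2}, and the only delicate point is this accounting — keeping straight which quantities are per-batch-element and which are per-instance, and tracking how the batch factor $n$ either cancels the $1/n$ from shrinking the inner dimension to $r/n$ or is converted into a factor $m$ via $n=\Theta(m)$.
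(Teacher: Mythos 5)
Your proof is correct and follows essentially the same route as the paper's: invoke Theorem~\ref{thm 3.2} with the inner dimension $r/n$, use $n=\Theta(m)$, and check that forming $\sum_i C_i$ is dominated by the decoding cost. Your explicit bookkeeping of amortized-versus-total cost (the batch factor $n$ cancelling the $1/n$ from $r\mapsto r/n$ in the upload/encoding/worker bounds but surviving as the factor $m$ in the download/decoding bounds) is exactly the accounting the paper's terser ``substitute $r$ with $r/n$'' argument leaves implicit, and spelling it out is an improvement rather than a deviation.
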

        \begin{proof}
            The recovery threshold, upload and download complexities, as well as the encoding and computational complexity for each worker node, follow directly from Theorem~\ref{thm 3.2} by substituting \( r \) with \( r/n \) and assuming \( m = O(n) \).
            Considering the decoding procedure, we need to account for the summation cost of \(\sum_{i=1}^n C_i\), which requires at most \(\tilde{O}\left(n \frac{st}{uv}\right)\) operations and is dominated by the decoding complexity. Thus, the total decoding complexity is \(\tilde{O}\left(\frac{ts}{uv} m R \log^2 R\right)\).
        \end{proof}

        The batch preprocessing of matrices \( A \) and \( B \) in $\EP$-I follows the partition way in MatDot codes. Another approach of batch-possessing can also follow the type of splitting in Polynomial codes. Thus, we give the second construction of Single $\EP$-II in the following. It is easy to see that Batch-$\EP$ based on EP type partitioning is composed of $\EP$-I and $\EP$-II, we will not go into details.

        Let \( N \) denote the number of worker nodes in EP codes, with \( m = \lceil \log_p(N)/d \rceil\). Our second construction for single matrix multiplication is based on batch processing via partitioning in Polynomial codes and involves two applications of RMFE. Suppose \((\phi_1, \psi_1)\) is an \((n, \sqrt{m})\)-RMFE defined over \(\mathsf{GR}\), and \((\phi_2, \psi_2)\) is an \((n, \sqrt{m})\)-RMFE defined over \(\mathsf{GR}_{\sqrt{m}}\), as follows:

        \[
        \mathsf{GR}^n \xrightarrow{\phi_1} \mathsf{GR}_{\sqrt{m}} \xrightarrow{\psi_1} \mathsf{GR}^n,
        \]
        \[
        \mathsf{GR}_{\sqrt{m}}^n \xrightarrow{\phi_2} \mathsf{GR}_m \xrightarrow{\psi_2} \mathsf{GR}_{\sqrt{m}}^n.
        \]

        This approach enables efficient batch processing for single matrix multiplication by leveraging structured partitioning and the properties of RMFE across distributed nodes. 

        \medskip

        \fbox{
            \begin{minipage}{0.95\linewidth}
                \begin{center} {\bf $\EP$-II}\end{center}
                \begin{itemize}
                    \item[--] Input: \( A \in \mathsf{GR}^{t \times r} \) and \( B \in \mathsf{GR}^{r \times s} \).
                    \item[--] Output: $AB$.
                    \item[(1)] Let $N$ be the number of worker nodes and $m=\left\lceil\log_p (N)/d\right\rceil$. Assume $n=\Theta(\sqrt{m})$ is a common divisor of $s$ and $t$. Then \( A \) and \( B \) can be partitioned as follows
                                \[A =
                                    \begin{pmatrix}
                                        A_1 \\
                                        \vdots \\
                                        A_n
                                    \end{pmatrix},\ 
                                    B = \begin{pmatrix}
                                        B_1 & \cdots & B_n
                                    \end{pmatrix}, 
                                \]
                                where \(A_i\in\mathsf{GR}^{\frac{t}{n}\times r }\) and \(B_j\in\mathsf{GR}^{r\times \frac{s}{n}}\).
                    \item[(2)] Use $\phi_1$ to pack $\{A_i,A_i,\dots,A_i\}$ for $1\leq i\leq n$ and $\{B_1,B_2,\dots,B_n\}$ into $\mathcal{A}_i\in\mathsf{GR}_{\sqrt{m}}^{\frac{t}{n}\times r}$ for $1\leq i\leq n$ and $\mathcal{B}\in\mathsf{GR}_{\sqrt{m}}^{r\times \frac{s}{n}}$, respectively; 
                    \item[(3)] Call Batch-$\EP$ with inputs \(\{\mathcal{A}_1, \mathcal{A}_2, \dots, \mathcal{A}_n\}\) and \(\{\mathcal{B}, \mathcal{B}, \dots, \mathcal{B}\}\), producing the output \(\{\mathcal{A}_1\mathcal{B}, \dots, \mathcal{A}_n\mathcal{B}\}\). More specifically, use \(\phi_2\) to pack \(\{\mathcal{A}_1, \mathcal{A}_2, \dots, \mathcal{A}_n\}\) and \(\{\mathcal{B}, \mathcal{B}, \dots, \mathcal{B}\}\) into \(\mathbb{A} \in \mathsf{GR}_m^{\frac{t}{n} \times r}\) and \(\mathbb{B} \in \mathsf{GR}_m^{r \times \frac{s}{n}}\), respectively. Next, call EP codes over \(\mathsf{GR}_m\) to compute \(\mathbb{C} = \mathbb{A}\mathbb{B}\). Finally, use \(\psi_2\) to unpack \(\mathbb{C}\) to obtain \(\{\mathcal{A}_1\mathcal{B}, \dots, \mathcal{A}_n\mathcal{B}\}\).

                    \item[(4)] For $1\leq i\leq n$, use $\psi_1$ to unpack $\mathcal{A}_i\mathcal{B}$ to get $\{A_iB_1,A_iB_2,\dots,A_iB_n\}$. Then output 
                    \[C=\begin{pmatrix}
                                        A_1B_1 & A_1B_2 &\dots & A_1B_n \\
                                        \vdots \\
                                        A_nB_1 & A_nB_2 & \dots & A_nB_n
                                    \end{pmatrix}\]
                \end{itemize}
            \end{minipage}
        }

        \medskip

        \begin{Cor}\label{cor:poly}
            Let $A, B, N, m, n$ be defined as in $\EP$-I. Assume $u\mid \frac{t}{n},v\mid \frac{s}{n},w\mid r$ are divisors such that $R:=uvw+w-1\leq N$. The recovery threshold in the above $\EP$-II is $R$.  Moreover, the communication complexities are
            \begin{itemize} 
                \item[--] Upload: $O\left((\frac{trv+sru}{uvw} )\sqrt{m}N\right)$.
                \item[--] Download: $O\left(\frac{ts}{uv}R\right)$. 
            \end{itemize}
            The computational costs are
            \begin{itemize}
                \item[--] Encoding: $\tilde{O}\left((\frac{trv+sru}{uvw})\sqrt{m}N\log^2 N \right)$.
                \item[--] Decoding: $\tilde{O}(\frac{ts}{uv}R\log^2 R)$.
                \item[--] The computation of each worker node: $\tilde{O}\left(\frac{tsr}{uvw}\right)$.
            \end{itemize}
        \end{Cor}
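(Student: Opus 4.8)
The plan is to track the cost of each of the four steps of $\EP$-II and show that the overhead introduced by the two RMFE layers (and by the Polynomial-code style outer partitioning) is always dominated by the cost of the single EP-code invocation over $\mathsf{GR}_m$ in step (3), so that the stated amortized bounds follow directly from Lemma~\ref{Lem:3.1} applied to $\mathbb{A}\in\mathsf{GR}_m^{\frac{t}{n}\times r}$ and $\mathbb{B}\in\mathsf{GR}_m^{r\times\frac{s}{n}}$ with partition parameters $u,v,w$. First I would observe that the concatenation of $(\phi_1,\psi_1)$ and $(\phi_2,\psi_2)$ gives an $(n^2,m)$-RMFE over $\mathsf{GR}$ (via the Concatenation Lemma), and that the product matrix $\mathbb{C}=\mathbb{A}\mathbb{B}$, once unpacked by $\psi_2$ then $\psi_1$, yields exactly the $n^2$ blocks $A_iB_j$; this is the correctness argument and is immediate from $\mathbf{x}\star\mathbf{y}=\psi(\phi(\mathbf{x})\phi(\mathbf{y}))$ applied entrywise, exactly as in the displayed computation of Section~\ref{sec.3.1}. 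Hence the recovery threshold equals that of the inner EP code, $R=uvw+w-1$.

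Next I would do the bookkeeping for the five metrics. The inner EP code over $\mathsf{GR}_m$ has $\frac{t}{n}\times r$ times $r\times\frac{s}{n}$ input, so by Lemma~\ref{Lem:3.1} its costs in $\mathsf{GR}_m$-operations (before accounting for the $[\mathsf{GR}_m:\mathsf{GR}]=m$ blowup and before amortizing) are: upload $O(\frac{(t/n)vr+(s/n)ur}{uvw}Nm)$, download $O(\frac{ts}{n^2uv}Rm)$, encoding $\tilde O(\frac{(t/n)vr+(s/n)ur}{uvw}mN\log^2N)$, decoding $\tilde O(\frac{ts}{n^2uv}mR\log^2R)$, and each worker $\tilde O(\frac{trs}{n^2uvw}m)$. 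Because $m=\Theta(n^2)$ here, the factor $m/n^2=\Theta(1)$, so after amortizing over the $n^2$ products $A_iB_j$ the $m$ is absorbed — giving exactly upload $O(\frac{trv+sru}{uvw}\sqrt m N)$ (the leftover $\sqrt m$ coming from $m/n=\Theta(\sqrt m)$ since only one of the two indices is amortized on the upload side: $\mathbb{A}$ packs $n$ matrices $\mathcal A_i$, $\mathbb B$ is a single matrix), download $O(\frac{ts}{uv}R)$, encoding $\tilde O(\frac{trv+sru}{uvw}\sqrt m N\log^2N)$, decoding $\tilde O(\frac{ts}{uv}R\log^2R)$, worker $\tilde O(\frac{trs}{uvw})$. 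Then I would check that the RMFE applications in steps (2) and (4) contribute only lower-order terms: each $\phi_j$ or $\psi_j$ costs $O(\sqrt m\log^2\sqrt m\log\log\sqrt m)$ by Lemma~\ref{lem:rmfe1}, and the number of scalar positions being packed/unpacked is $O(\frac{tr}{n}+\frac{rs}{n}+\frac{ts}{n^2}\cdot n^2)$-ish, so these totals are $\tilde O((tr+rs+ts)\sqrt m)$ and are dominated by the encoding/decoding bounds above exactly as in the proof of Theorem~\ref{thm 3.2}; amortizing by $n^2$ only helps.

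The step I expect to require the most care is the asymmetry in the amortization factor between upload/encoding (where $\sqrt m$ survives) and download/decoding (where it does not), and making the "$m/n^2=\Theta(1)$, $m/n=\Theta(\sqrt m)$" accounting rigorous given $n=\Theta(\sqrt m)$. The cleanest way is to note that after packing, the \emph{effective} problem fed to Lemma~\ref{Lem:3.1} is a single matrix product over $\mathsf{GR}_m$ whose per-element cost is $\tilde O(m)$ $\mathsf{GR}$-operations, that the number of \emph{output} elements of the whole computation is $\frac{ts}{uv}$ amortized times $R$ — no $m$ at all once unpacked, since $\psi_1,\psi_2$ land back in $\mathsf{GR}$ and there are $n^2$ of them — whereas the \emph{input} side only enjoys one factor of $n$ of amortization (the $\mathcal B$-side is replicated, not batched), leaving $m/n=\Theta(\sqrt m)$. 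I would present this as a short lemma-style paragraph mirroring the proof of Corollary~\ref{cor:mat}, then conclude by invoking Lemma~\ref{Lem:3.1} with $(t,r,s)\mapsto(t/n,r,s/n)$, absorbing the RMFE overhead, and collecting terms.
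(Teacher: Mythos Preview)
Your approach is essentially the paper's: invoke Lemma~\ref{Lem:3.1} on the inner EP call with $(t,r,s)\mapsto(t/n,\,r,\,s/n)$ over $\mathsf{GR}_m$, use $m=\Theta(n^2)$ so that $m/n=\Theta(\sqrt m)$ and $m/n^2=\Theta(1)$, and then check that the $\phi_1,\phi_2,\psi_1,\psi_2$ applications contribute only lower-order terms. One expositional caution: the bounds in the corollary are the \emph{total} costs for computing the single product $AB$, not per-block amortized costs, so your phrasing ``amortizing over the $n^2$ products $A_iB_j$'' is misleading---the $n$- and $n^2$-factor savings come directly from the reduced dimensions of $\mathbb A$ and $\mathbb B$, not from any division step, and your own final substitution $(t,r,s)\mapsto(t/n,r,s/n)$ is the correct accounting.
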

        \begin{proof}
            The proof is largely similar to that of Corollary~\ref{cor:mat}, with the primary difference being the use of RMFE twice for packing and unpacking, where \( m = O(n^2) \).
        
            The upload and download costs come from Step (3). By Lemma~\ref{Lem:3.1} with $m=O(n^2)$, and by substituting $t$ and $s$ with $t/n$ and $s/n$, respectively, the upload cost is $O\left((\frac{trv+sru}{nuvw} )mN\right)=O\left((\frac{trv+sru}{uvw} )\sqrt{m}N\right).$ Similarly, the download cost is  $O\left(\frac{ts}{n^2uv}mR\right)=O\left(\frac{ts}{uv}R\right)$. 
        
            The computation of each Worker node is $\tilde{O}\left(\frac{tsr}{n^2uvw}\tilde{O}(m)\right)=\tilde{O}\left(\frac{tsr}{uvw}\right)$.
            
            For the encoding and decoding complexity of $\EP$-II, it suffices to show that the encoding and decoding complexity of plain EP codes by Lemma~\ref{Lem:3.1} with $m=O(n^2)$ dominate the complexity of RMFEs, respectively.
        
            The encoding process consists of Steps (2) and (3). The computation of RMFE maps $\phi_1$ in Step (2) requires $\left(n(\frac{t}{n}r)+(\frac{s}{n}r)\right)\tilde{O}(n)\leq \tilde{O}((tr+sr)\sqrt{m})$ operations in $\mathsf{GR}$. The computation of RMFE maps $\phi_2$ in Step (3) requires $\left((\frac{t}{n}r)+(\frac{s}{n}r)\right)\tilde{O}(n\cdot n)\ \leq \tilde{O}((tr+sr)\sqrt{m})$ operations in $\mathsf{GR}$. Thus, the total complexity of Steps (2) and (3) is dominated by $\tilde{O}\left((\frac{trv+sru}{uvw})\sqrt{m}N\log^2 N \right)$ as $uvw\leq R\leq N$. 
            
            The decoding process consists of Steps (3) and (4). By similar analysis as in the encoding complexity, the computation of RMFE maps $\psi_1$ in Step (3) and $\psi_2$ in Step (4) only require $\tilde{O}(st)$ operations, which is dominated by the decoding complexity of plane EP codes over $\mathsf{GR}_m$, i.e., $\tilde{O}(\frac{ts}{n^2uv}mR\log^2 R)=\tilde{O}(\frac{ts}{uv}R\log^2 R)$.
        \end{proof}

        \begin{Rmk}[Comparisons of Single CDMM based on EP codes]
            Recall that $m=\lceil(\log_p N)/d\rceil$. By Corollaries~\ref{cor:mat} and ~\ref{cor:poly}, we see that $\EP$-I (resp. $\EP$-II) saves the complexities of encoding (resp. decoding), upload, and the computation of each Worker node by a factor of $m$ than the plain EP codes over $\mathsf{GR}_m$ presented in Lemma~\ref{Lem:3.1}. The download and decoding complexity of $\EP$-I remains the same as plain EP codes. However, the upload and encoding complexities of $\EP$-II are still smaller than plain EP codes by a factor of $1/\sqrt{m}$.
        \end{Rmk}

        \begin{Rmk}[Comparison of $\EP$ with CDMM from Algebraic geometry codes]\label{rmk:AGcomp}
            For a small Galois field $\mathsf{GR}(p,d)=\mathsf{GF}({p^d})$ with $p^d<N$, there exists algebraic function field $F/\mathsf{GF}({p^d})$ with at least $N$ rational places such that the CDMM with distributed order $N$ can be extended to algebraic codes defined on $F/\mathsf{GF}(q)$. By the state-of-art AG-based Polydot CDMM \cite{DBLP:journals/corr/abs-2408-01806},  given matrices ${A} \in \FF_q^{t \times r}$ and ${B} \in \FF_q^{r \times s}$, the complexities for AG-based codes\cite{DBLP:journals/corr/abs-2408-01806} to compute ${C} = {A}{B}$ is as follows:
            \begin{itemize}
                \item[-] Encoding: ${O}\left((\frac{trv+sru}{uvw})N^3 \right)$;
                \item[-] Decoding: $O\left(\frac{ts}{uv} R^2 + R^3 \right)$;
            \end{itemize}
            where \( R \approx  (2w+1)uv+4g \) is recovery threshold and $g$ is the genus of $F/\mathsf{GF}(q)$. Other metrics are
            \begin{itemize}
                \item[-] Upload: $O\left(\left(\frac{tr}{uw} + \frac{rs}{wv}\right) N\right)$;
                \item[-] Download: $O\left(\frac{ts}{uv} R \right)$;
                \item[-] The computaion of each worker node: $O\left(\frac{trs}{uwv} \right)$.
            \end{itemize}
            By Corollaries~\ref{cor:mat} and \ref{cor:poly}, we see that the encoding (resp. decoding) complexity of AG-based CDMM is larger than our $\EP$-I and $\EP$-II by a multiple of $N$ (resp. $R$), which is rather expensive for higher distributed order $N$ and recovery threshold $R$. However, our single $\EP$-I and $\EP$-II cannot simultaneously achieve the optimal upload and download complexities as AG-based CDMM does. Specifically, $\EP$-I achieves optimal upload but a sacrifice of worsen download by a multiple of $m$ than AG-based DMM; while $\EP$-II achieves optimal download but a sacrifice of worsen upload by a multiple of $\sqrt{m}$ than AG-based DMM;
        \end{Rmk}

    \section{Evaluations of Single $\EP$}\label{sec:4}
        In this section, we implement and evaluate the performance of our method for computing single large matrix multiplication over \(\ZZ_{2^e}\). We analyze the improvements in computation cost and communication volume for both the master and worker nodes, comparing our approach with the original DMM using EP codes.

        \subsection{Setup}
            We implemented both the original DMM (EP) and our two optimized versions (i.e., \(\EP\)-I and \(\EP\)-II) using RMFE in \verb!C++!, based on the \verb!NTL! library\footnote{\href{https://libntl.org/}{https://libntl.org/}}.
            
            Our experiments were conducted on a Supermicro SYS-7049GP-TRT server, equipped with two Intel(R) Xeon(R) Gold 5220R CPUs (2.20GHz, 24 cores per processor, hyper-threading) and 128GB of RAM. Each node operated within a \verb!Docker! container, restricted to single-thread execution. We conducted each experiment 10 times per configuration and averaged the results for accuracy.

            In our evaluation, the matrices were defined over \(\ZZ_{2^{64}}\) (i.e., \(\mathsf{GR}(2^{64},1)\)), and we evaluated two scenarios with varying numbers of worker nodes, which required different extension degrees for computation. The first setup, with 8 worker nodes, corresponded to computations over \(\mathsf{GR}(2^{64}, 3)\), while the second setup, with 16 worker nodes, used \(\mathsf{GR}(2^{64}, 4)\). For DMM partitioning, we set parameters \(u = v = 2\) and \(w = 1\) for the 8-worker case, resulting in a recovery threshold of 4. In the 16-worker case, we set \(u = w = v = 2\), leading to a threshold of 9. Both optimization types were configured with \(n = 2\). Since we only tested small Galois rings with \(m = 3\) or \(m = 4\), we did not split matrix \(A\) in \(\EP\)-II and applied only \(\phi_1\).

            We tested matrix multiplications of square matrices, varying the size to assess scalability and performance impact across dimensions. Matrix sizes included 2000 (for two \(2000\times2000\) matrices), 4000, 6000, and 8000, with each dimension representing the height (or width) of the square matrices involved.

            It is important to note that our Galois ring is implemented using the \verb!NTL! library directly, without optimization, and all implementations are single-threaded, omitting other potential optimization techniques. Consequently, the computation time is less competitive compared to highly optimized field-based versions of DMM. This approach, however, allows for a direct comparison with the original DMM (EP codes) method, isolating the impact of our optimizations without the influence of additional performance enhancements. As interest in Galois rings grows and more research focuses on their applications, we anticipate the emergence of more efficient implementations in the near future. At that point, integrating the techniques proposed in this work could lead to significantly improved performance.
    
        \subsection{Master Node}
            \begin{figure}
                \centering
                \subfloat[Computation]{\label{fig:ste3}\includegraphics[width=0.5\linewidth]{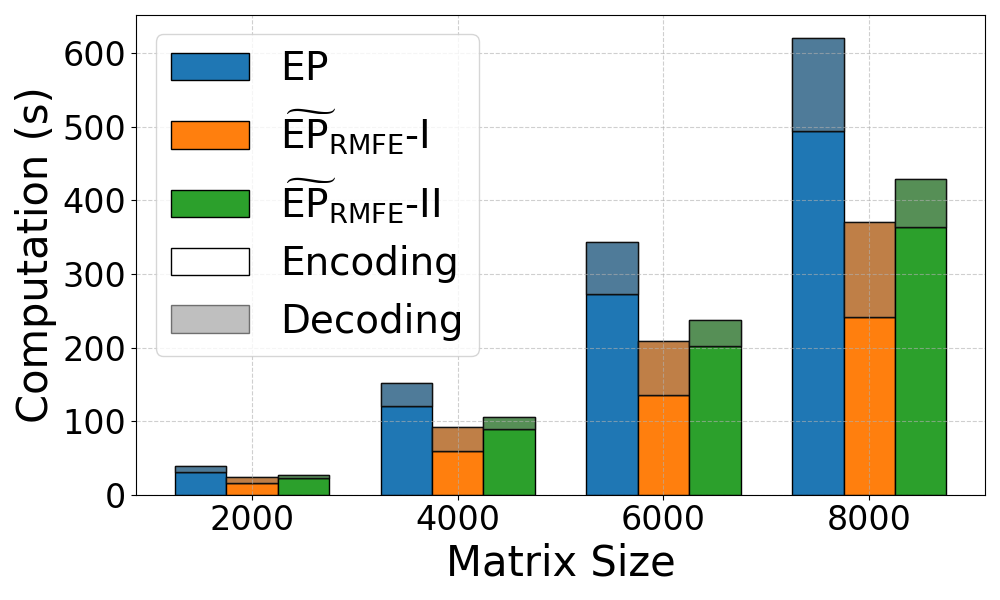}}
                \subfloat[Communication]{\label{fig:sce3}\includegraphics[width=0.5\linewidth]{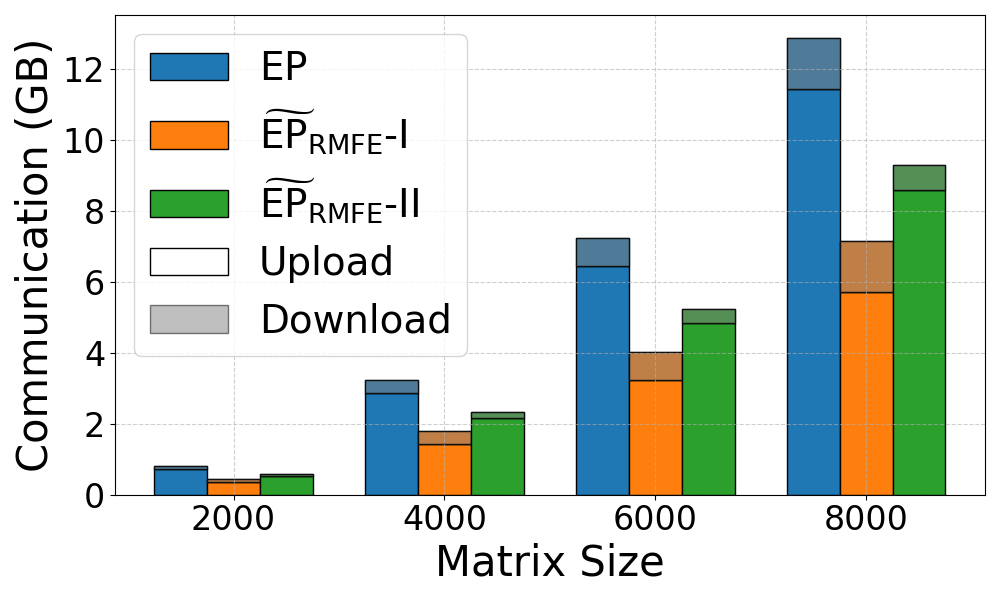}}
                \caption{Comparison of the computation time and communication volume of master node in the case of 8 worker nodes.}
                \label{fig:se3}
            \end{figure}

            \begin{figure}
                \centering
                \subfloat[Computation]{\label{fig:ste4}\includegraphics[width=0.5\linewidth]{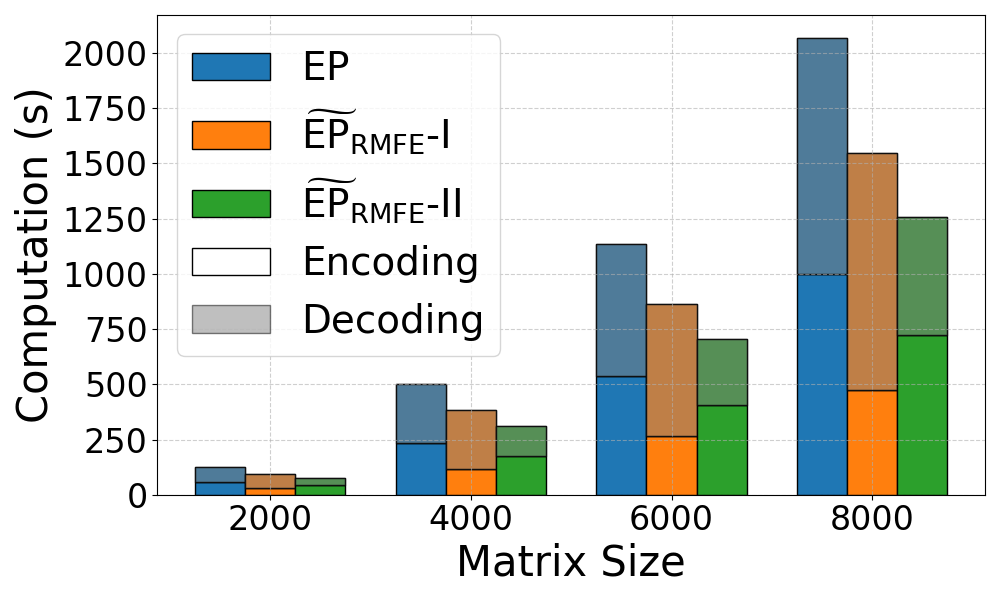}}
                \subfloat[Communication]{\label{fig:sce4}\includegraphics[width=0.5\linewidth]{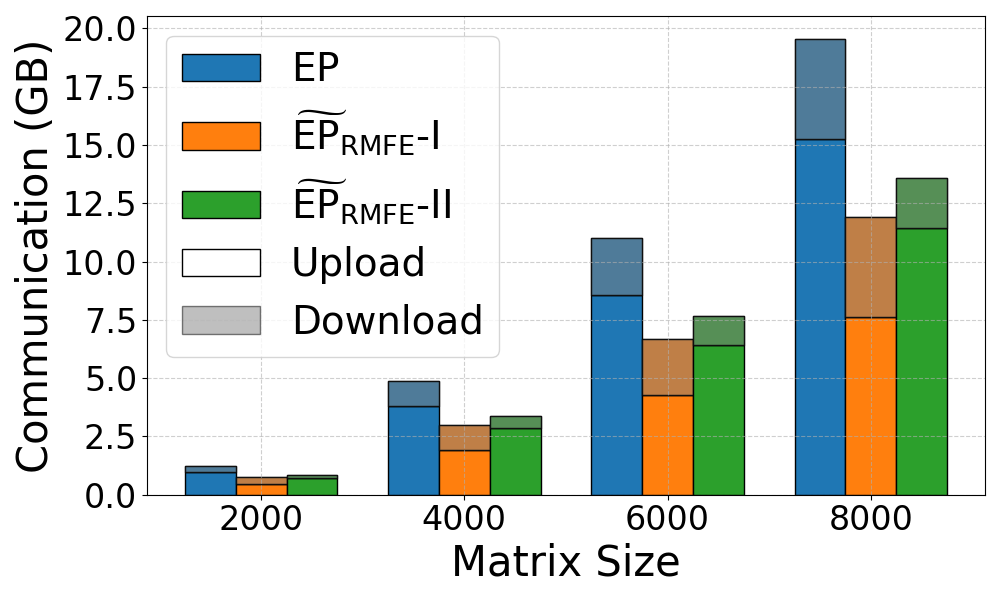}}
                \caption{Comparison of the computation time and communication volume of master node in the case of 16 worker nodes.}
                \label{fig:se4}
            \end{figure}
            The evaluation results for the master node's computation time and communication volume shown in Figure~\ref{fig:se3} and Figure~\ref{fig:se4}, reflect consistent trends across all scenarios and matrix sizes. The baseline EP consistently shows the highest computation time and communication volume. In contrast, \(\EP\)-I reduces encoding time by half while maintaining the same decoding time. For communication, \(\EP\)-I I achieves a 50\% reduction in upload volume, while download volume remains unaffected. These efficiencies make \(\EP\)-I particularly well-suited for configurations where decoding time represents a minor portion of the overall computation time, as seen in the 8-worker node scenario over \(\mathsf{GR}(2^{64},3)\).

            On the other hand, \(\EP\)-II offers a balanced performance between EP and \(\EP\)-I for encoding time, while significantly reducing decoding time by half. This is advantageous in scenarios where decoding time is a substantial component of the total computation time. For communication, \(\EP\)-II reduces download volume by half, with upload volume positioned between EP and \(\EP\)-I. This configuration proves particularly advantageous in the 16-worker node setup over the larger Galois ring \(\mathsf{GR}(2^{64},4)\), where decoding time becomes more prominent in the computation workload. Thus, \(\EP\)-II outperforms in this setup, with a potential for further gains as the number of worker nodes and Galois ring size increase.

        \subsection{Worker Node}
            \begin{figure}
                \centering
                \subfloat[Computation]{\label{fig:cte3}\includegraphics[width=0.5\linewidth]{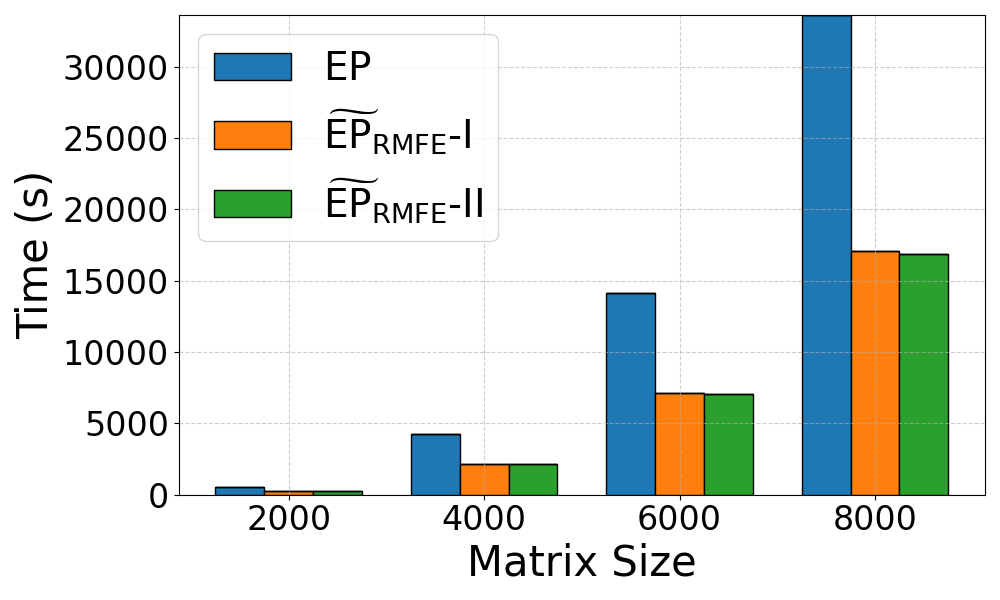}}
                \subfloat[Communication]{\label{fig:cce3}\includegraphics[width=0.5\linewidth]{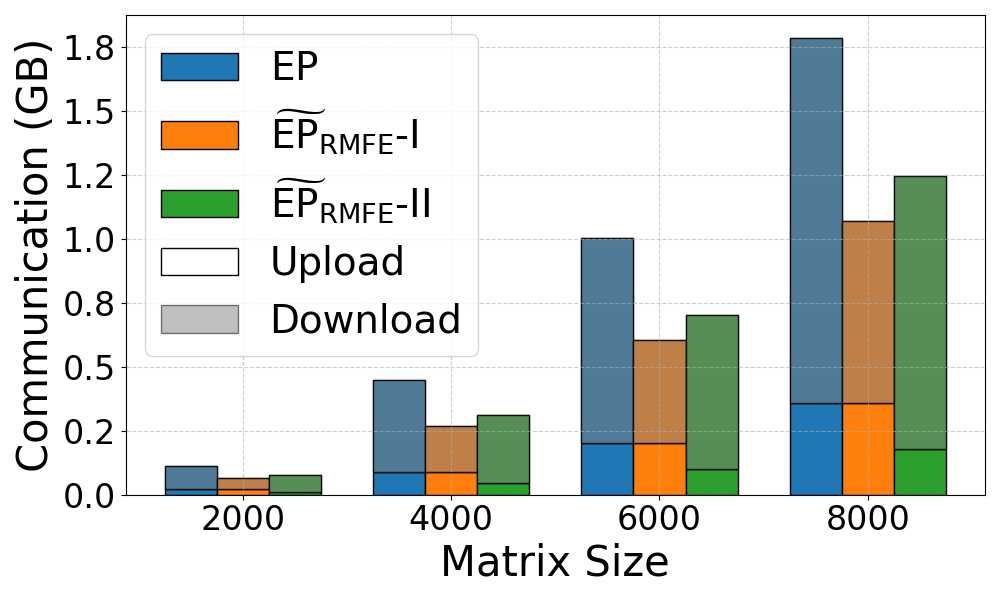}}
                \caption{Comparison of the computation time and communication volume of worker node in the case of 8 worker nodes.}
                \label{fig:ce3}
            \end{figure}

            \begin{figure}
                \centering
                \subfloat[Computation]{\label{fig:cte4}\includegraphics[width=0.5\linewidth]{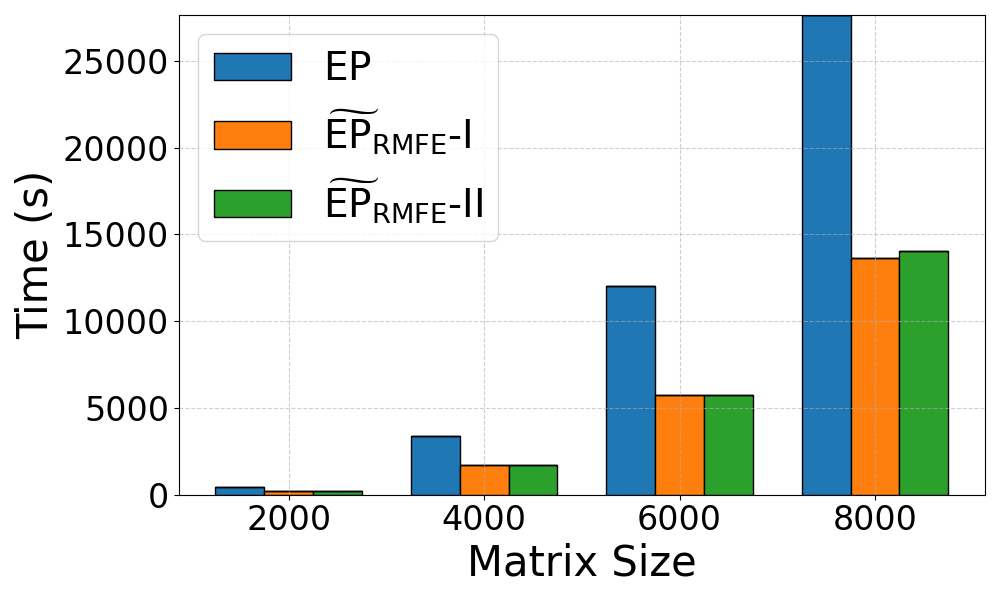}}
                \subfloat[Communication]{\label{fig:cce4}\includegraphics[width=0.5\linewidth]{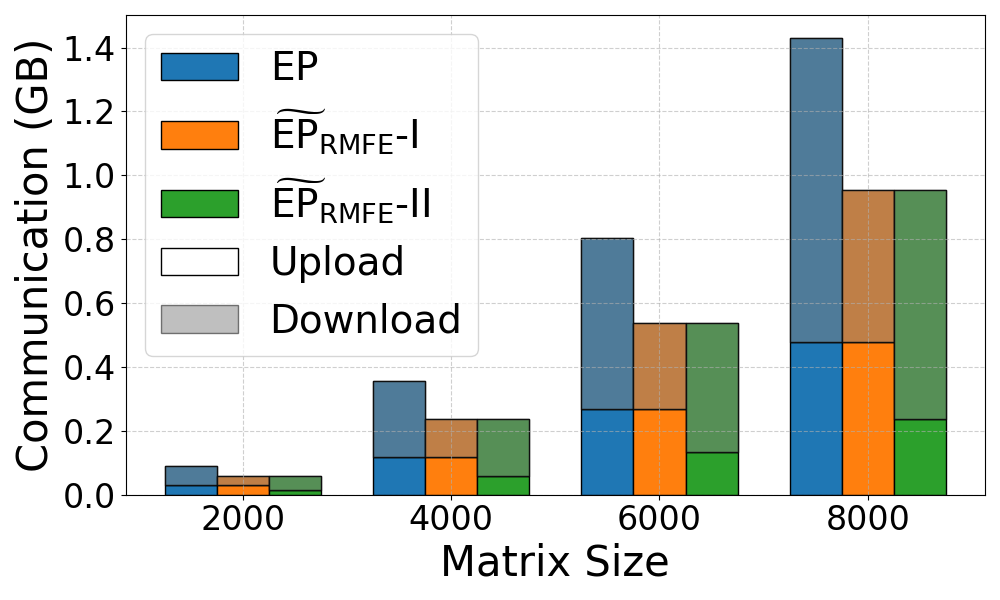}}
                \caption{Comparison of the computation time and communication volume of worker node in the case of 16 worker nodes.}
                \label{fig:ce4}
            \end{figure}
            The evaluation results for computation time and communication volume per worker node are presented in Figure~\ref{fig:ce3} and Figure~\ref{fig:ce4}. Both \(\EP\)-I and \(\EP\)-II reduce computation time by half compared to the baseline in both the 8- and 16-worker configurations. For communication, the figures illustrate results only for the worker nodes involved in the recovery process, as nodes not participating in recovery have an upload volume of zero. Communication patterns on worker nodes mirror those observed on the master node: reductions in the master’s upload volume translate to reductions in the workers' download volume, and vice versa.

            When comparing Figure~\ref{fig:cte3} (8-worker case) and Figure~\ref{fig:cte4} (16-worker case), we observe that with the same matrix size, computation costs are lower with more worker nodes, despite the larger Galois ring (and its associated higher computation cost). This is because additional worker nodes allow for finer partitioning of the original matrix, resulting in smaller sub-matrices per worker node. The benefit of this reduction in per-node workload outweighs the increased cost of operations over a larger Galois ring.

            Furthermore, with a larger Galois ring, RMFE can be leveraged to pack more elements together, further reducing both computation and communication costs. For example, in a configuration with 32 worker nodes (requiring operations over \(\mathsf{GR}(2^{64}, 5)\), setting \(n=3\) and using a \((3,5)\)-RMFE enables a more efficient packing strategy, thus optimizing the workload for each worker node even further.

    \section{Conclusion}
        In this paper, we construct efficient CDMM schemes over general Galois rings, introducing a versatile CDMM framework for batch matrix multiplications via the RMFE technique. Building on this, we further optimize single matrix multiplication by applying different batch processing strategies to input matrices under our CDMM framework, resulting in two new Single CDMM schemes, \(\EP\)-I and \(\EP\)-II. Evaluations substantiate the effectiveness of our methods, demonstrating substantial reductions in computation time and communication volume compared to the original EP codes. Specifically, \(\EP\)-I minimizes encoding and upload costs, making it well-suited for cases with limited network bandwidth, while \(\EP\)-II reduces decoding and download costs, enhancing performance in computation-heavy settings. Our experiments highlight these optimizations' potential to scale efficiently across a wide range of distributed computing environments, reinforcing the feasibility of CDMM applications over Galois rings and laying the groundwork for future improvements in fault-tolerant, high-performance matrix multiplication.

    \bibliographystyle{IEEEtran}
    \bibliography{reference}

\end{document}